\theoremstyle{plain}
\newtheorem{theorem}{Theorem}[section]
\newtheorem{lemma}[theorem]{Lemma}
\theoremstyle{definition}
\theoremstyle{remark}
\numberwithin{equation}{section}
\numberwithin{figure}{section}
\newcommand{\nwc}{\newcommand}
\nwc{\grad}{\nabla}
\nwc{\pd}{\partial}
\nwc{\RR}{\mathbb{R}}
\nwc{\NN}{\mathbb{N}}
\nwc{\skorohod}{{\mathbb{D}(\mathbb{R})}}
\nwc{\spacetime}{{\mathbb{R}\times[0,\infty)}}
\nwc{\cross}{\times}
\nwc{\Id}{\mbox{Id}}
\nwc{\loc}{\mbox{loc}}
\nwc{\ninN}{n\in\naturals}
\nwc{\indicator}[1]{\mathbb{1}_{\left[ {#1} \right] }}
\nwc{\ind}{1\hspace{-2.4mm}{1}}
\nwc{\Lboth}{\cal{B}}
\nwc{\Uboth}{\cal{C}}
\nwc{\Linit}{A}
\nwc{\Uinit}{D}
\nwc{\Lstar}{A^*}
\nwc{\Ustar}{D^*}
\nwc{\lboth}{\cal{A}}
\nwc{\uboth}{\cal{D}}
\nwc{\linit}{B'}
\nwc{\uinit}{C'}
\nwc{\lstar}{{B^*}'}
\nwc{\ustar}{{C^*}'}
\begin{document}
\title{Rates of convergence for Smoluchowski's coagulation equations}
\author{Ravi Srinivasan\textsuperscript{1}}
\date{\today}

\maketitle
\begin{abstract}
We establish nearly optimal rates of convergence to self-similar solutions of
Smoluchowski's coagulation equation with kernels $K = 2$, $x + y$, and $xy$.
The method is a simple analogue of the Berry-Ess\'een theorem in classical
probability and requires minimal assumptions on the initial data, namely that
of an extra finite moment condition. For each kernel it is shown that the
convergence rate is achieved in the case of monodisperse initial data.
\end{abstract}
\noindent
Keywords: coagulation, coarsening, dynamic scaling, self-similarity, Berry-Ess\'een theorem
\footnotetext[1]
{Department of Mathematics, The University of Texas at Austin, Austin, TX 78712,
Email: \tt{rav@math.utexas.edu}}
%
%
%
%
%
%
\section{Introduction}
\label{sec:intro}

Smoluchowski's coagulation equation
\begin{equation}
  \label{eqn:SmoluchowskiContinuous} \partial_t n ( t, x ) = \frac{1}{2}
  \int_0^x K ( x - y, y ) n ( t, x - y ) n ( t, y ) d y - \int_0^{\infty} K (
  x, y ) n ( t, x ) n ( t, y ) d y
\end{equation}
is a fundamental mean-field model for cluster growth that arises in a wide
range of fields, including physical chemistry, astrophysics, and the dynamics
of biological systems (see \cite{Aldous} for a review). Here,
$n ( t, x )$ is the density of the number distribution of clusters of
size $x \in (0,\infty)$ at time $t \geq 0$ and $K ( x, y )$ is a symmetric rate
kernel. In the case of discrete sizes $l \in \{1, 2, \cdots \}$ an analogous equation
for the coefficients $n_l(t)$ of the number distribution is
\begin{equation}
  \label{eqn:SmoluchowskiDiscrete} \partial_t n_l( t ) = \frac{1}{2} \sum_{j
  = 1}^{l - 1} \kappa_{ l - j, j } n_{l - j}( t ) n_j( t ) - \sum_{j =
  1}^{\infty} \kappa_{ l, j } n_l( t ) n_j( t ) ,
\end{equation}
with $\kappa_{l,j} = K( l , j )$. The continuous and discrete cases can be
considered together via the weak formulation of Smoluchowski's coagulation equation,
given in terms of a moment identity for the number distribution $n ( t, d x )$:
\begin{equation}
  \label{eqn:Smoluchowski} \partial_t \int_{( 0, \infty )} \phi ( x ) n ( t, d x )
  = \frac{1}{2} \int_{( 0, \infty )} \int_{( 0, \infty )} ( \phi ( x + y ) - \phi ( y )
  - \phi ( x ) ) K ( x, y ) n ( t, d y ) n ( t, d x ),
\end{equation}
where $\phi$ is a suitable test function. We direct the interested reader to
{\cite{MP1,MP2}} for further discussion on the well-posedness and asymptotic behavior
of measure-valued solutions $n ( t, d x )$ to (\ref{eqn:Smoluchowski}). A general survey
of existing literature in stochastic coalescence is given in {\cite{Aldous}}, with more
recent work reviewed in {\cite{Pego}}. Note that throughout this paper we try to keep
the same notation as in {\cite{MP2}}.

The present work is restricted to the homogeneous, `solvable' kernels $K = 2,
x + y, x y$, for which $K ( \alpha x, \alpha y ) = \alpha^{\gamma} K (
x, y )$ with $\gamma = 0, 1, 2$. It has been shown in {\cite{MP1}}
that (\ref{eqn:Smoluchowski}) admits a one-parameter family of self-similar
solutions whose domains of attraction under dynamic scaling are characterized by the
tails of the initial data. In particular, there is a unique self-similar solution with
finite $( \gamma + 1 )$th moment, which has exponentially decaying tails and
attracts all initial data that satisfy this finite moment condition. In
{\cite{MP2}}, it was shown that with an additional integrability
hypothesis on the Fourier transform of the initial data one has uniform
convergence of densities (in the continuous case) or coefficients (in the
discrete case) to a self-similar solution as $t$ approaches the time horizon
$T_{\gamma}$, where $T_{\gamma} = \infty$ for $\gamma = 0, 1$ and $T_{\gamma}
= 1$ for $\gamma = 2$. The proof of this result is similar to that of uniform
convergence of densities in the central limit theorem (see Feller
{\cite{Feller}}, Section XV.5).

Here we establish near-optimal $L^{\infty}$-rates of convergence to the exponentially
decaying self-similar solutions of Smoluchowski's coagulation equation with $K
= 2, x + y, x y$. Continuing the analogy with the classical central limit theorem (CLT),
this result corresponds to the Berry-Ess\'een theorem for rates of convergence to the normal
law ({\cite{Feller}}, XVI.5). The method is simple and robust: it holds
for all initial distributions, requiring only an additional finite moment
condition beyond those needed for well-posedness and convergence to the self-similar profile. In
particular, it is true when the initial distribution has a density or is a
lattice measure. Our work improves upon recent results of Ca{\~n}izo, Mischler, and Mouhot
{\cite{CMM}} as it holds for a large class of initial data and for all of the solvable kernels.
More broadly, it once again demonstrates the utility of applying methods from
classical probability to study asymptotic behavior under rescaling for integral equations
of convolution type. Additional estimates that make use of the present work,
such as large deviations theorems, will be developed elsewhere.

We present the results for the solvable kernels in a unified framework. With $t_0 = 1$ for
$\gamma = 0$, $t_0 = 0$ for $\gamma = 1, 2$, and $n_0 ( d x ) = n ( t_0, d x
)$, define the moments $\mu_j = \int_{( 0, \infty )} x^j n_0 ( d x )$ of the
initial data. Assume that the $\gamma$th and $( \gamma + 1 )$th moments are
finite, for which we scale $x$ and $n_0$ so that $\mu_{\gamma} = \mu_{\gamma +
1} = 1$. As shown in {\cite{MP2}}, finiteness of the $\gamma$th moment ensures well-posedness of
(\ref{eqn:Smoluchowski}), while finiteness of the $( \gamma + 1 )$th
moment guarantees that the initial data are in the domain of attraction of the
self-similar solution with exponential decaying tails. These self-similar solutions are
absolutely continuous and are explicitly given in terms of their densities (see {\cite{Aldous,MP1,MP2}})
by
\begin{equation}
  \label{eqn:LimitingDist} n ( t, x ) = \frac{m_{\gamma} ( t
  )}{\lambda_{\gamma} ( t )^{\gamma + 1}} \hat{n}_{*, \gamma} \left(
  \frac{x}{\lambda_{\gamma} ( t )} \right).
\end{equation}
Here, the profiles $\hat{n}_{*, \gamma}$ for $\hat{x} \geq 0$ take the form
\begin{equation}
  \label{eqn:SelfSimilarProfiles} \hat{n}_{*, 0} ( \hat{x} ) = e^{-
  \hat{x}}, \qquad \hat{x} \hat{n}_{*, 1} ( \hat{x} ) = \hat{x}^2
  \hat{n}_{*, 2} ( \hat{x} ) = \frac{1}{\sqrt{2 \pi}} \hat{x}^{- 1 / 2}
  e^{- \hat{x} / 2}
\end{equation}
and the time-dependent moments $m_{\gamma}$ and scalings $\lambda_{\gamma}$
are
\begin{equation}
  \label{eqn:Moments} m_0 ( t ) = t^{- 1}, \qquad m_1 ( t ) = 1,
  \qquad m_2 ( t ) = ( 1 - t )^{- 1}
\end{equation}
\begin{equation}
  \label{eqn:Scalings} \lambda_0 ( t ) = t, \qquad \lambda_1 ( t ) =
  e^{2 t}, \qquad \lambda_2 ( t ) = ( 1 - t )^{- 2} .
\end{equation}

The main result of our work is as follows. Define the time parameter
$\tau_{\gamma} ( t ) = \int_{t_0}^t m_{\gamma} ( s ) ds$,
explicitly given by
\begin{equation}
  \label{eqn:TimeParameter} \tau_0 ( t ) = \log t , \qquad \tau_1 ( t
  ) = t, \qquad \tau_2 ( t ) = \log ( 1 - t )^{- 1}.
\end{equation}
If we assume finiteness of the $( \gamma + 2 )$th moment of the initial data,
we have exponentially fast convergence to self-similar form:

\begin{theorem}
  \label{thm:Main}Let $n_0$ be a positive measure with $\int_{( 0, \infty )}
  x^{\gamma} n_0 ( d x ) = \int_{( 0, \infty )} x^{\gamma + 1} n_0 ( d x ) =
  1$. Suppose that the additional finite moment assumption $\int_{( 0, \infty )}
  x^{\gamma + 2} n_0 ( d x ) < \infty$ holds and let $n ( t, d x )$ be the
  measure-valued solution to Smoluchowski's equation with $K = 2$, $x + y$, or
  $x y$ and initial data $n_0 ( dx )$. With rescaled solution
  \begin{equation}
    \label{eqn:RescaledSoln} \hat{n} ( \tau_{\gamma}, d \hat{x} ) : =
    \frac{\lambda_{\gamma} ( t )^{\gamma}}{m_{\gamma} ( t )} n ( t,
    \lambda_{\gamma} ( t ) d \hat{x} )
  \end{equation}
  define the corresponding distribution functions
  \begin{equation}
    \label{eqn:cdf} F_{\gamma} ( \tau_{\gamma}, \hat{x} ) = \int_{( 0, \hat{x}
    ]} \hat{y}^{\gamma} \hat{n} ( \tau_{\gamma}, d \hat{y} ), \qquad
    F_{*, \gamma} ( \hat{x} ) = \int_{( 0, \hat{x} ]} \hat{y}^{\gamma}
    \hat{n}_{*, \gamma} ( \hat{y} ) d\hat{y}.
  \end{equation}
  Then for $\tau_{\gamma} \in [ 0, \infty )$,
  \[ \sup_{\hat{x} > 0} |F_{\gamma} ( \tau_{\gamma}, \hat{x} ) - F_{*,
     \gamma} ( \hat{x} ) | \leq C ( \mu_{\gamma + 2} ) ( 1 + \tau_{\gamma} )
     e^{- \tau_{\gamma}} \]
  where $C ( \mu_{\gamma + 2} )$ is a constant that depends only on
  $\mu_{\gamma + 2}$.
\end{theorem}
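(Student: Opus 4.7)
The proof follows an analogue of the Berry-Ess\'een argument from classical probability, exploiting the fact that for each solvable kernel $K = 2, x+y, xy$ Smoluchowski's equation admits an explicit solution after taking a suitable Laplace-type transform. I sketch the argument in detail for $K=2$ and indicate the parallel structure for the other two kernels.

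For $K=2$, set $\phi(t,s) = \int_0^\infty (1-e^{-sx})\,n(t,dx)$. A direct computation using the weak form (\ref{eqn:Smoluchowski}) gives the Riccati equation $\partial_t\phi = -\phi^2$, with explicit solution $\phi(t,s) = \phi_0(s)/(1+(t-1)\phi_0(s))$. After the similarity rescaling (\ref{eqn:RescaledSoln}) one obtains
\[ \hat\phi(\tau_0,\hat s) = t\,\phi(t,\hat s/t) = \frac{t\phi_0(\hat s/t)}{1 + (t-1)\phi_0(\hat s/t)}, \]
while the self-similar profile satisfies $\hat\phi_{*,0}(\hat s) = \hat s/(1+\hat s)$. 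The normalization $\mu_1=1$ together with $\mu_2 < \infty$ yields the pointwise control $|\phi_0(s) - s| \leq \tfrac{1}{2}\mu_2 s^2$ from the elementary bound $|1-e^{-u}-u| \leq u^2/2$. Algebraic simplification of the rational expression for $\hat\phi - \hat\phi_{*,0}$ and substitution of this estimate then produce
\[ |\hat\phi(\tau_0,\hat s) - \hat\phi_{*,0}(\hat s)| \leq C(\mu_2)\,\frac{\hat s^2}{(1+\hat s)^2}\,e^{-\tau_0}, \qquad \hat s \geq 0. \]
For $K=x+y$ and $K=xy$, the Laplace transforms of the size-biased measures $x\,n(t,dx)$ and $x^2 n(t,dx)$ satisfy Burgers- and Hamilton--Jacobi-type equations respectively, which admit explicit solutions by the method of characteristics as in \cite{MP1}; the analogous computation, with $\mu_{\gamma+2}$ controlling the remainder via the same kind of elementary bound, yields a pointwise estimate of the same form on the rescaled transform.

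The main step is to convert this Laplace-transform bound into the desired sup-norm bound on distribution functions. For this I would apply an Esseen-type smoothing inequality. Since both $\hat n_{*,\gamma}$ (by explicit formula) and $\hat n(\tau_\gamma,\cdot)$ (uniformly in $\tau_\gamma$, by analytic properties of the explicit transform solution) have exponentially decaying tails, the Laplace transforms extend analytically across the imaginary axis, and the classical Esseen inequality
\[ \sup_{\hat x>0}|F_\gamma(\tau_\gamma,\hat x) - F_{*,\gamma}(\hat x)| \leq \frac{1}{\pi}\int_{-T}^T \frac{|\hat\phi(\tau_\gamma,i\xi) - \hat\phi_{*,\gamma}(i\xi)|}{|\xi|}\,d\xi + \frac{C}{T} \]
applies. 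Inserting the previous bound makes the integrand $\lesssim e^{-\tau_\gamma}|\xi|/(1+\xi^2)$, whose integral over $[-T,T]$ is $O(e^{-\tau_\gamma}\log(1+T))$. Balancing the two terms by choosing $T \sim e^{\tau_\gamma}$ produces the claimed rate $C(\mu_{\gamma+2})(1+\tau_\gamma)e^{-\tau_\gamma}$; the logarithmic loss is precisely the Berry-Ess\'een factor.

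The principal technical obstacle is establishing the uniform-in-$\tau_\gamma$ exponential tail bound on $\hat n(\tau_\gamma,\cdot)$ needed to validate the smoothing step on the imaginary axis. The pointwise density convergence of \cite{MP2} does not by itself supply such a uniform tail. I would close this gap by complex-analytic arguments applied to the explicit transform solution in a horizontal strip, in the spirit of the Hayman-type methods used in \cite{MP2}, or alternatively by directly propagating an exponential-moment bound through the transform equation. The remaining work --- carrying out the algebra for each of the three kernels and verifying that the constants depend only on $\mu_{\gamma+2}$ --- is routine.
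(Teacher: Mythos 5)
Your sketch for $K=2$ is essentially the paper's argument: use the Esseen smoothing inequality to reduce the distribution-function bound to an estimate on the Fourier--Laplace transform, derive the moment-based Taylor control $|u_0(s)-u_*(s)|\leq C(\mu_2)|s|^2$ near the origin, push that control out to $|s|\lesssim e^{\tau}$ using the explicit Riccati solution along the ray characteristics $s_0\mapsto e^{\tau}s_0$, and choose $T\sim e^{\tau}$ to balance. Your packaging as a single pointwise bound on the transform difference is fine here precisely because the characteristics for $K=2$ are rays that preserve the imaginary axis, so $u(\tau,\cdot)$ and $u_*(\cdot)$ pull back along the same map.

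The genuine gap is the sentence claiming that ``the analogous computation \ldots yields a pointwise estimate of the same form'' for $K=x+y$. For the additive kernel the characteristics are $s(t;0,s_0)=e^{2t}(s_0-\varphi_0(s_0)(1-e^{-t}))$: they depend on the initial data through $\varphi_0$, so $u(t,\cdot)$ and $u_*(\cdot)$ pull back along two \emph{different} characteristic maps, and neither map preserves the imaginary axis (its image is a curve $\Gamma_t$ lying in the closed left half-plane). You therefore cannot evaluate and compare $u(t,\sigma)$ and $u_*(\sigma)$ at a common pure-imaginary $\sigma$ by a single change of variables; the paper handles this with a contour deformation --- moving the contour from the imaginary axis to $\Gamma_t$ and $\Gamma^*_t$, pulling both back to a common segment $\mathcal{A}'\mathcal{D}'$ of the imaginary axis in the $s_0$-plane, and controlling the error from the connecting arcs. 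That deformation and its error estimates are the bulk of the proof for $K=x+y$, and your proposal does not address them. Relatedly, the obstacle you flag as ``principal'' --- a uniform exponential tail bound on $\hat{n}(\tau_\gamma,\cdot)$ --- is not what is actually needed: the smoothing lemma requires only boundedness of $F_{*,\gamma}'$ and matching first moments, and $u(\tau,\cdot)$ is automatically defined on the imaginary axis because $\hat{n}$ is a positive measure. What one genuinely needs is analyticity across the imaginary axis to justify the contour deformation, and this follows from the explicit transform solution (the domain of analyticity $\Omega_t$ extends into $\mathbb{C}_-$, cf.~\cite{MP2}), not from tail estimates. Finally, the paper dispatches $K=xy$ by Drake's change of variables to the additive case rather than by a separate Hamilton--Jacobi analysis; you should use that reduction rather than repeat the argument.
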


Since we are primarily concerned with an asymptotic rate of decay to
self-similarity, we make no effort to optimize the constant in front of the
exponential term. We also demonstrate that exponential decay rates are
achieved in the case of monodisperse initial data $n_0 ( d x ) = \delta_1 ( d
x )$. Specifically, we show that for these initial conditions
\[ \sup_{\hat{x} > 0} |F_{\gamma} ( \tau_{\gamma}, \hat{x} ) - F_{*,
   \gamma} ( \hat{x} ) | = O ( e^{- \tau_{\gamma}} ) . \]

The method of proof for these results requires us to work in the Fourier
domain. To do so we make use of a smoothing argument given by Feller {\cite{Feller}},
XVI.3, which for completeness we restate here without proof. The distribution
functions in (\ref{eqn:cdf}) satisfy the assumptions of the lemma.

\begin{lemma}
  \label{lem:Smoothing}Let $F$ and $F_*$ be probability distribution
  functions and assume $|F_*' ( x ) | \leq 1$. Consider the Fourier
  transforms of their corresponding measures,
  \[ u ( i k ) = \int_{( - \infty, \infty )} e^{- i k x} F ( d x ),
     \qquad u_* ( i k ) = \int_{( - \infty, \infty )} e^{- i k x}
     F_* ( d x ) . \]
  Assume that $u' ( 0 ) = u_*' ( 0 )$ ---that is, equality of the
  first moment of $F ( d x )$ and $F_* ( d x )$. With the mollifier
  \[ \psi_T ( x ) = \frac{1 - \cos ( T x )}{\pi T x^2} \]
  define
  \[ \Delta ( x ) = F ( x ) - F_* ( x ), \qquad \Delta_T ( x ) =
     \psi_T ( x ) * \Delta . \]
  Then
  \begin{equation}
    \label{eqn:Smoothing1} \sup_x | \Delta ( x ) | \leq 2 \sup_x | \Delta_T (
    x ) | + \frac{24}{\pi T} .
  \end{equation}
\end{lemma}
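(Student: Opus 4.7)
The plan is to give a direct geometric argument in the spirit of Berry-Ess\'een. The two key facts about the Fej\'er-type mollifier are that $\psi_T$ is a non-negative probability density and that it satisfies the pointwise tail bound $\psi_T(x) \le 2/(\pi T x^2)$. The Lipschitz hypothesis $|F_*'|\le 1$ is what converts a near-extremal pointwise value of $\Delta$ into a lower bound on an entire interval. The first-moment matching $u'(0) = u_*'(0)$ is stated here for subsequent use in the Fourier-domain estimates of $\Delta_T$ to which the lemma will be applied; it plays no direct role in the proof of \eqref{eqn:Smoothing1} itself.

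Let $\eta = \sup_x |\Delta(x)|$. Since $F$ and $F_*$ both vanish at $-\infty$ and tend to $1$ at $+\infty$, I would pick a point $x_0$ with $|\Delta(x_0)|$ nearly equal to $\eta$. By replacing $\Delta$ with $-\Delta$ (which amounts to reversing the roles of $F$ and $F_*$) if necessary, I may assume $\Delta(x_0) \ge \eta - \epsilon$ for any prescribed $\epsilon > 0$. Combining monotonicity of $F$ with the $1$-Lipschitz bound on $F_*$ gives
\[
  \Delta(x_0 + s) \;\ge\; \Delta(x_0) - s \;\ge\; \eta - \epsilon - s, \qquad s \ge 0,
\]
so $\Delta$ dominates a triangular spike of height $\eta-\epsilon$ and width $\eta-\epsilon$ to the right of $x_0$.

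I would then estimate $\Delta_T$ at the shifted point $x^* = x_0 + \eta/2$, chosen so that the spike is centered about $0$ after the substitution $u = \eta/2 - s$. Splitting the integral representation
\[
  \Delta_T(x^*) \;=\; \int \psi_T\!\left(\tfrac{\eta}{2} - s\right) \Delta(x_0 + s)\, ds
\]
into the spike contribution ($s\in[0,\eta]$) and the complementary tail, bounding the spike from below by the triangular profile and the tail by the crude estimate $|\Delta| \le \eta$, and invoking evenness of $\psi_T$ to eliminate the linear term $\int u\,\psi_T(u)\,du = 0$, yields
\[
  \Delta_T(x^*) \;\ge\; \frac{\eta}{2} \;-\; \frac{3\eta}{2}\!\int_{|u| > \eta/2}\!\psi_T(u)\,du \;-\; \epsilon.
\]

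The final step inserts the tail bound $\int_{|u|>\eta/2}\psi_T(u)\,du \le 8/(\pi T\eta)$ obtained from $\psi_T(u)\le 2/(\pi T u^2)$, producing $\sup_x|\Delta_T(x)| \ge \eta/2 - 12/(\pi T) - \epsilon$; rearranging and sending $\epsilon\downarrow 0$ gives \eqref{eqn:Smoothing1}. The conceptually delicate step is the choice of shift $h = \eta/2$, which aligns the odd part of the mollifier's first moment with the geometry of the spike so that the linear contribution cancels exactly. Everything else is bookkeeping: tail decay of $\psi_T$ and the crude sup bound on $\Delta$ away from $x_0$.
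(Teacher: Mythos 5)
The paper states this lemma ``without proof'' and cites Feller (Vol.~II, XVI.3), so there is no in-paper proof to compare against; your reconstruction is essentially Feller's argument and it is correct. The bookkeeping all checks out: with $x^* = x_0 + \eta/2$ and the substitution $u = \eta/2 - s$, the spike interval $s \in [0,\eta]$ maps to $u \in [-\eta/2,\eta/2]$, the linear term $\int_{-\eta/2}^{\eta/2} u\,\psi_T(u)\,du$ vanishes by evenness, and the tail estimate $\int_{|u|>\eta/2}\psi_T \le 8/(\pi T\eta)$ gives $\Delta_T(x^*) \ge \eta/2 - \epsilon - 12/(\pi T)$, hence the stated bound after rearranging and letting $\epsilon\downarrow 0$. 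You are also right that the first-moment hypothesis $u'(0)=u_*'(0)$ is not used in proving \eqref{eqn:Smoothing1}; it is needed only to make the integrand in \eqref{eqn:Smoothing2} regular at $k=0$.

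One phrasing is slightly off: handling the case $\Delta(x_0) \le -(\eta-\epsilon)$ is not literally ``reversing the roles of $F$ and $F_*$,'' since only $F_*$ is assumed $1$-Lipschitz and only $F$'s monotonicity is invoked on the other side; swapping them would lose those hypotheses. The correct symmetry is a reflection in $s$: for $s \ge 0$, monotonicity of $F$ and the Lipschitz bound on $F_*$ give $\Delta(x_0 - s) \le \Delta(x_0) + s$, so the triangular spike extends to the \emph{left} of $x_0$, and you evaluate $\Delta_T$ at $x^* = x_0 - \eta/2$ to get $\Delta_T(x^*) \le -\eta/2 + \epsilon + 12/(\pi T)$. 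The conclusion $\sup_x|\Delta_T| \ge \eta/2 - \epsilon - 12/(\pi T)$ is unchanged, so this is a cosmetic fix, not a gap.
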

In terms of Fourier transforms, (\ref{eqn:Smoothing1}) is
\begin{equation}
  \label{eqn:Smoothing2} \sup_x |F ( x ) - F_* ( x ) | \leq \frac{1}{\pi}
  \sup_x \left| \int_{- T}^T \frac{e^{i k x}}{i k} ( u ( i k ) - u_* ( i k
  ) ) d k \right| + \frac{24}{\pi T} .
\end{equation}

In the following sections, we will need to approximate Fourier transforms near
the origin by their Taylor expansions. To do so, we will use the following basic result
also due to Feller (see {\cite{Feller}}, XV.4 and the subsequent appendix).
Let $\mathbb{C}_+ = \{z \in \mathbb{C} : \text{Re} ( z ) > 0 \}$
and $\mathbb{C}_- = \{z \in \mathbb{C} : \text{Re} ( z ) < 0 \}$ be the open
left- and right-half of the complex plane, and denote by $\bar{\mathbb{C}}_+$
and $\bar{\mathbb{C}}_-$ their respective closures. Let $F(dx)$ be a
probability measure supported on $[0,\infty)$.  $F(dx)$ is then characterized
by its Fourier-Laplace transform
\begin{equation}
  \label{eqn:FourierLaplace} u(s) = \int_{[0,\infty)} e^{-sx} F(dx), \hspace{1em}
  s \in \bar{\mathbb{C}}_+ ,
\end{equation}
which is approximated as follows:

\begin{lemma}
  \label{lem:TaylorApproximation}
  Suppose $F(dx)$ is a probability measure on $[0,\infty)$ with finite $\gamma$th moment
  \begin{equation}
	\label{eqn:TaylorMoments}
	M_{\gamma} := \int_{[0,\infty)} x^{\gamma} F(dx) < \infty ,
  \end{equation}
  where $\gamma \geq 1$ is an integer. Then for $\xi, s \in \bar{\mathbb{C}}_+$,
  the Fourier-Laplace transform (\ref{eqn:FourierLaplace}) satisfies
  \begin{equation}
	\label{eqn:TaylorApproximation}
	\left| u(\xi + s) - \left( u(\xi) + \frac{u'(\xi)}{1!} s + \cdots
	+ \frac{u^{(\gamma-1)}(\xi)}{(\gamma-1)!} s^{\gamma-1} \right) \right|
	\leq \frac{M_\gamma}{\gamma!} |s|^{\gamma} .
  \end{equation}
\end{lemma}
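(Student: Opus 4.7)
The plan is to reduce the claim to a pointwise Taylor remainder estimate for the scalar exponential $e^{-z}$ in the closed right half-plane, and then integrate this estimate against the measure $F$. Writing
\[
u(\xi + s) = \int_{[0,\infty)} e^{-\xi x}\, e^{-sx}\, F(dx)
\]
and using $u^{(k)}(\xi) = (-1)^k \int_{[0,\infty)} x^k e^{-\xi x} F(dx)$, which is legitimate by differentiation under the integral since the finite $\gamma$th moment dominates $x^k e^{-\xi x}$ for $k \leq \gamma - 1$ and $\xi \in \bar{\mathbb{C}}_+$, each term of the polynomial to be subtracted is expressible as an integral of $e^{-\xi x}$ against a monomial in $sx$. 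Consequently, the entire inequality should fall out of an integrand-level bound combined with $|e^{-\xi x}| \leq 1$, valid for $\xi \in \bar{\mathbb{C}}_+$ and $x \geq 0$.

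The key analytic step is the scalar estimate
\[
\Bigl| e^{-z} - \sum_{k=0}^{\gamma-1} \frac{(-z)^k}{k!} \Bigr| \leq \frac{|z|^\gamma}{\gamma!}, \qquad z \in \bar{\mathbb{C}}_+,
\]
which I would establish via the integral form of Taylor's remainder: differentiating $t \mapsto e^{-tz}$ and integrating by parts (or invoking the usual formula) gives
\[
e^{-z} - \sum_{k=0}^{\gamma-1} \frac{(-z)^k}{k!} = \frac{(-z)^\gamma}{(\gamma-1)!} \int_0^1 (1-t)^{\gamma-1} e^{-tz}\, dt.
\]
The right-half-plane hypothesis yields $|e^{-tz}| = e^{-t\,\mathrm{Re}(z)} \leq 1$ for all $t \in [0,1]$, and the elementary identity $\int_0^1 (1-t)^{\gamma-1}\,dt = 1/\gamma$ then closes the bound.

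To finish, I apply the scalar bound with $z = sx$, noting that $\mathrm{Re}(sx) = x\,\mathrm{Re}(s) \geq 0$ whenever $x \geq 0$ and $s \in \bar{\mathbb{C}}_+$, multiply through by $|e^{-\xi x}|\leq 1$, and integrate against $F(dx)$. The first term collapses to $u(\xi+s)$, the sum rearranges via the moment-derivative identity into the Taylor polynomial of $u$ at $\xi$ displayed in the lemma, and the right-hand side becomes $M_\gamma|s|^\gamma/\gamma!$. I do not anticipate any genuine obstacle: the argument is a complex-analytic variant of the standard proof that Taylor expansions of characteristic functions inherit sharp remainders from moment bounds. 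The only care required is the sign bookkeeping in converting $\int x^k e^{-\xi x}\,F(dx)$ to $u^{(k)}(\xi)$, and a quick dominated-convergence check to justify differentiating $u$ up to order $\gamma - 1$ on $\bar{\mathbb{C}}_+$.
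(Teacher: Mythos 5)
Your proof is correct and takes essentially the same route as the paper: both reduce the claim to a pointwise Taylor remainder bound for the complex exponential on a half-plane using $|e^{w}|\leq 1$, and then integrate against $F(dx)$ with the extra factor $|e^{-\xi x}|\leq 1$. The only cosmetic difference is that the paper derives the scalar bound by Feller's iterated-integration induction whereas you invoke the integral form of the Taylor remainder directly; these are the same computation packaged differently.
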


\begin{proof}
  We begin by showing that for $s \in \bar{\mathbb{C}}_-$,
\begin{equation}
  \label{eqn:TaylorExponential} \left| e^s - \left(1 + \frac{s}{1!} + \cdots
  + \frac{s^{\gamma-1}}{(\gamma-1)!} \right) \right| \leq \frac{|s|^\gamma}{\gamma!}.
\end{equation}
  Define $g_1(s) = \int_0^s e^z dz$ and $g_{\gamma}(s)
  = \int_0^s g_{\gamma-1}(z) dz$, where the integrals are taken over
  the straight line segment beginning at the origin and ending and $s$.
  A simple calculation shows that $g_{\gamma}(s) = e^s - \left(1 + \frac{s}{1!}
  + \cdots + \frac{s^{\gamma-1}}{(\gamma-1)!} \right)$. Then (\ref{eqn:TaylorExponential})
  follows by induction. To see this, note that
  \[ |g_1(s)| = \left| s \int_0^1 e^{\tau s} d\tau \right| \leq |s|, \hspace{1em}
  s \in \bar{\mathbb{C}}_- \]
  since $|e^{\tau s}| \leq 1$. Assuming $|g_{\gamma-1}(s)| \leq
  \frac{|s|^{\gamma-1}}{(\gamma-1)!}$ we obtain the desired estimate:
  \[ |g_\gamma(s)| = \left| s \int_0^1 g_{\gamma-1}(\tau s) d\tau \right|
  \leq \frac{|s|^{\gamma}}{(\gamma-1)!} \int_0^1 \tau^{\gamma-1} d\tau
  = \frac{|s|^{\gamma}}{{\gamma}!}. \]
  Equation (\ref{eqn:TaylorApproximation}) now follows directly. Using
  (\ref{eqn:TaylorExponential}) and the definition (\ref{eqn:TaylorMoments}) yields
  \begin{align}
  & \left| u(\xi + s) - \left( u(\xi) + \frac{u'(\xi)}{1!} s + \cdots
	+ \frac{u^{(\gamma-1)}(\xi)}{(\gamma-1)!} s^{\gamma-1} \right) \right| \nonumber\\
  & = \left| \int_{[0,\infty)} e^{-\xi x} \left\{ e^{-sx} - \left( 1 - \frac{sx}{1!} + \cdots
    + (-1)^{\gamma-1}\frac{(sx)^{\gamma-1}}{(\gamma-1)!} \right) \right\} F(dx) \right|
  \leq \frac{M_\gamma}{\gamma!} |s|^{\gamma} . \nonumber
  \end{align}
\end{proof}
Taking $\xi = 0$ in (\ref{eqn:TaylorApproximation}) and using that $M_\gamma = (-1)^{\gamma} u^{(\gamma)}(0)$,
gives the particularly useful expression
\begin{equation}
	\label{eqn:TaylorApproximationZero}
	\left| u(s) - \left( 1 - \frac{M_1}{1!} s + \cdots
	+ (-1)^{\gamma-1} \frac{M_{\gamma-1}}{(\gamma-1)!} s^{\gamma-1} \right) \right|
	\leq \frac{M_\gamma}{\gamma!} |s|^{\gamma}, \hspace{1em} s \in \bar{\mathbb{C}}_+.
\end{equation}

Let us outline the remainder of the paper. We first prove Theorem \ref{thm:Main} in the
case of constant coagulation kernel in Section \ref{sec:K=2}, where the method of
proof is simpler and more easily applied. The same method is then used in Section
\ref{sec:K=x+y} to establish the result for the additive kernel. Finally, the
result for the multiplicative kernel is given in Section \ref{sec:K=xy} by transforming
to the case of additive kernel through a well-known change of variables.
\section{Rate of convergence for the constant kernel $K = 2$}
\label{sec:K=2}

In this section we restate and prove Theorem \ref{thm:Main} for the constant
kernel case. That is, we show:

\begin{theorem}
  \label{thm:K=2}Let $n_0$ be a positive measure such that $\int_{( 0, \infty )} n_0 (
  d x ) = \int_{( 0, \infty )} x n_0 ( d x ) = 1$, and $\mu_2 = \int_{( 0,
  \infty )} x^2 n_0 ( d x ) < \infty$. With $\tau = \log t$ and in terms of
  the rescaling (\ref{eqn:Moments}-\ref{eqn:Scalings}), let $\hat{n} ( \tau, d
  \hat{x} )$ be the rescaled solution (\ref{eqn:RescaledSoln}) to
  Smoluchowski's equation with $K = 2$ and initial data $n_0 ( d x )$. Then
  with
  \[ F ( \tau, \hat{x} ) = \int_{( 0, \hat{x} ]} \hat{n} ( \tau, d \hat{y} ),
     \qquad F_* ( \hat{x} ) = \int_{( 0, \hat{x} ]} \hat{n}_{*,
     0} ( \hat{y} ) d\hat{y} . \]
  and $\tau \in [ 0, \infty )$,
  \begin{equation}
    \label{eqn:K=2Result} \sup_{\hat{x} > 0} |F ( \tau, \hat{x} ) - F_* (
    \hat{x} ) | \leq C ( \mu_2 ) ( 1 + \tau ) e^{- \tau}
  \end{equation}
  where $C ( \mu_2 )$ is a constant that depends only on $\mu_2$.
\end{theorem}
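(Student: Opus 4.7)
The plan is to pass to the Laplace transform, where for $K = 2$ the Smoluchowski equation collapses to a Bernoulli ODE that can be solved explicitly, and then feed the resulting closed-form expression into the smoothing lemma (Lemma \ref{lem:Smoothing}) together with the Taylor bound (Lemma \ref{lem:TaylorApproximation}). Writing $u(t, s) = \int_{(0, \infty)} e^{-sx}\, n(t, dx)$ and testing (\ref{eqn:Smoluchowski}) with $\phi(x) = e^{-sx}$ gives $\partial_t u = u^2 - 2 m_0 u$ with $m_0(t) = u(t, 0) = 1/t$. The substitution $v = 1/u$ linearises this to $v' = 2v/t - 1$, with solution $u(t, s) = [\,t + t^2(1/u_0(s) - 1)\,]^{-1}$. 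The rescaling (\ref{eqn:RescaledSoln}) with $\lambda_0 = t$, $m_0 = 1/t$, and $\tau = \log t$ yields
\[
\hat{u}(\tau, s) \;=\; \frac{u_0(se^{-\tau})}{e^\tau - (e^\tau - 1)\, u_0(se^{-\tau})},
\qquad
\hat{u}(\tau, s) - u_*(s) \;=\; \frac{e^\tau\bigl[u_0(\eta)(1 + \eta) - 1\bigr]}{[\,e^\tau - (e^\tau - 1) u_0(\eta)\,](1 + s)},
\]
where $\eta = se^{-\tau}$ and $u_*(s) = 1/(1+s)$ is the Laplace transform of $\hat{n}_{*, 0}$.

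The next step is to bound this ratio on the imaginary axis $s = ik$. Applying Lemma \ref{lem:TaylorApproximation} with $\gamma = 2$ and $\xi = 0$ gives $u_0(\eta) = 1 - \eta + R(\eta)$ with $|R(\eta)| \leq \mu_2 |\eta|^2 / 2$; expanding produces $u_0(\eta)(1 + \eta) - 1 = -\eta^2 + R(\eta)(1 + \eta)$, so the numerator is bounded by $C(\mu_2)\, k^2 e^{-\tau}$ for every $|k| \leq e^\tau$. For the first denominator factor, rewrite
\[
e^\tau - (e^\tau - 1)\, u_0(i\eta) \;=\; 1 + (1 - e^{-\tau})\, ik - (e^\tau - 1)\, R,
\]
whose imaginary part is at least $(1 - e^{-\tau})|k| - \mu_2\, k^2 e^{-\tau}/2$. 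For $\tau$ bounded away from zero and $|k| \leq c(\mu_2)\, e^\tau$, this quantity is no smaller than a constant times $|k|$. Combining with $|1 + ik| = \sqrt{1 + k^2}$ produces the two working estimates
\[
|\hat{u}(\tau, ik) - u_*(ik)| \leq C(\mu_2)\, k^2 e^{-\tau} \quad \text{for } |k| \leq 1, \qquad |\hat{u}(\tau, ik) - u_*(ik)| \leq C(\mu_2)\, e^{-\tau} \quad \text{for } 1 \leq |k| \leq c(\mu_2)\, e^\tau.
\]

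Finally, apply (\ref{eqn:Smoothing2}) with $T = c(\mu_2)\, e^\tau$, so that $24/(\pi T) \leq C(\mu_2)\, e^{-\tau}$. Splitting the Fourier integral at $|k| = 1$ and inserting the two estimates,
\[
\int_{|k| \leq 1} \frac{|\hat{u} - u_*|}{|k|}\, dk \;\leq\; C(\mu_2)\, e^{-\tau} \int_0^1 |k|\, dk \;\leq\; C(\mu_2)\, e^{-\tau},
\]
\[
\int_{1 \leq |k| \leq T} \frac{|\hat{u} - u_*|}{|k|}\, dk \;\leq\; C(\mu_2)\, e^{-\tau} \int_1^T \frac{dk}{k} \;\leq\; C(\mu_2)(1 + \tau)\, e^{-\tau},
\]
which, together with the $1/T$ contribution, deliver the claimed bound (\ref{eqn:K=2Result}). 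For $\tau \in [0, \tau_0]$ the right-hand side is bounded below by a positive constant depending only on $\mu_2$, so the trivial estimate $\sup_{\hat{x}} |F - F_*| \leq 1$ suffices after enlarging $C(\mu_2)$.

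The main obstacle is securing a lower bound on the factor $e^\tau - (e^\tau - 1)u_0(ike^{-\tau})$ that remains valid all the way up to $|k| \sim e^\tau$: at that end $|ike^{-\tau}|$ is of order unity and the crude expansion of $u_0$ is on the verge of breaking down. The rewriting $1 + (e^\tau - 1)(1 - u_0)$ is what rescues the argument, because the leading behaviour $1 - u_0(i\eta) \sim i\eta$ forces the imaginary part of this factor to grow linearly in $|k|$, while the Lemma \ref{lem:TaylorApproximation} remainder stays subdominant as long as $|k|e^{-\tau} \ll 1/\mu_2$. This precisely licenses the choice $T = c(\mu_2)\, e^\tau$ and is the source of the logarithmic factor $(1 + \tau)$ through $\int_1^T dk/k$.
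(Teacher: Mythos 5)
Your proposal is correct and follows essentially the same route as the paper: pass to the Fourier--Laplace transform, use the exact solution of the transformed equation, apply the Taylor bound of Lemma \ref{lem:TaylorApproximation} to control the numerator and denominator, and feed the resulting estimates into the smoothing inequality (\ref{eqn:Smoothing2}) with $T \sim e^\tau$, splitting the frequency integral into a near-origin piece giving $O(e^{-\tau})$ and a far piece giving $O(\tau e^{-\tau})$. The only cosmetic differences are that you derive the closed form from the Bernoulli ODE rather than by characteristics, you split the integral at $|k|=1$ rather than at $|\sigma|=\delta e^{-\tau}$, and your denominator lower bound degrades as $\tau\to 0$ so you dispose of small $\tau$ with the trivial bound, whereas the paper exploits the explicit form of $u_*$ to obtain a bound uniform in $\tau\geq 0$.
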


\subsection{Proof for the constant kernel} \label{subsec:K=2,Proof}

The proof is can be summarized as follows. For $s \in
\bar{\mathbb{C}}_+$ define the Fourier-Laplace transforms of the rescaled
number measures as
\[ u ( \tau, s ) = \int_{( 0, \infty )} e^{- s \hat{x}} \hat{n} ( \tau, d
   \hat{x} ), \qquad u_* ( s ) = \int_{( 0, \infty )} e^{- s
   \hat{x}} \hat{n}_* ( \hat{x} ) d \hat{x} \]
and let $u_0 ( s ) = u ( 0, s )$. By (\ref{eqn:Smoothing2}), we have that
\begin{equation}
  \label{eqn:K=2Smoothing1} \sup_{\hat{x} > 0} |F ( \tau, \hat{x} ) - F_*
  ( \hat{x} ) | \leq \frac{1}{\pi} \sup_{\hat{x} > 0} \left| \int_{- iT}^{iT}
  \frac{e^{\sigma \hat{x}}}{\sigma} ( u ( \tau, \sigma ) - u_* ( \sigma )
  ) d \sigma \right| + \frac{24}{\pi T}
\end{equation}
The uniform rate of convergence of the distribution function is thus given by
the convergence rate of $u ( \tau, s )$ to $u_* ( s )$. As shown in
{\cite{MP1,MP2}}, $u$ satisfies a first-order PDE that
can be solved in terms of $u_0$ by the method of characteristics in
$\bar{\mathbb{C}}_+$. The finiteness condition on $\mu_2$ is a statement
about the decay of the tail of $\hat{n}_0$, and hence, about the regularity
of $u_0$ near the origin. Using (\ref{eqn:TaylorApproximationZero}) to estimate the
difference $u_0 ( s ) - u_* ( s )$ near $s = 0$, we can propagate this
estimate outward at the growth rate of the characteristics.
Combining this with Lemma \ref{lem:Smoothing} gives us the desired result.

{\emph{1. Evolution of characteristics:}} As shown in {\cite{MP1}},
the Fourier-Laplace transform $u ( \tau, s )$ solves the PDE
\begin{equation}
  \label{eqn:K=2PDE} \partial_{\tau} u + s \partial_s u = - u ( 1 - u ) .
\end{equation}
This equation can be derived directly from (\ref{eqn:Smoluchowski}), or from
(\ref{eqn:SmoluchowskiContinuous}) if the initial number measure has a density
as in {\cite{MP2}}. Fix an initial point $s_0 \in \bar{\mathbb{C}}_+$. The
solution to (\ref{eqn:K=2PDE}) is given in terms of characteristics
$s ( \tau ; 0, s_0 )$ starting from $s_0$ at $\tau = 0$.
The characteristics satisfy
\begin{equation}
  \label{eqn:K=2Char0} \frac{ds}{d\tau} = s,
  \qquad s(0;0,s_0) = s_0 \in \bar{\mathbb{C}}_+,
\end{equation}
whose solution is
\begin{equation}
  \label{eqn:K=2Char} s ( \tau ; 0, s_0 ) = e^{\tau} s_0 \in
  \bar{\mathbb{C}}_+ .
\end{equation}
Note that the characteristics {\emph{are independent of the initial data
$u_0$}}. Geometrically, they are rays emanating from the origin at speed $e^{\tau}$.
In particular, the map $s_0 \mapsto s$ leaves the imaginary axis invariant.
Along characteristics we have
\[ \frac{du}{d \tau} = - u ( 1 - u ), \]
which can be integrated to yield the solution in terms of $s_0$:
\begin{equation}
  \label{eqn:K=2Soln} u ( \tau, s ) = e^{- \tau} \frac{u_0 ( s_0 )}{1 - u_0 (
  s_0 ) ( 1 - e^{- \tau} )} .
\end{equation}

The Fourier-Laplace transform of the self-similar solution,
\begin{equation}
  \label{eqn:K=2SolnSSExplicit} u_* ( s ) = \frac{1}{1 + s} .
\end{equation}
can also be expressed in terms of characteristics.
Fixing $s_0^* \in \bar{\mathbb{C}}_+$, the characteristics
$s^*(\tau; 0, s_0^*)$ corresponding to the solution of (\ref{eqn:K=2PDE})
with initial data $u_*$ are
\begin{equation}
  \label{eqn:K=2CharSS} s^* ( \tau ; 0, s_0^* ) = e^{\tau} s_0^* \in \bar{\mathbb{C}}_+.
\end{equation}
These characteristics are identical to (\ref{eqn:K=2Char}), as expected.
Since $u_* ( s )$ is a time-independent solution to (\ref{eqn:K=2PDE}),
\begin{equation}
  \label{eqn:K=2SolnSS} u_* ( s ) = e^{- \tau} \frac{u_* (
  s_0^* )}{1 - u_* ( s_0^* ) ( 1 - e^{- \tau} )} .
\end{equation}

{\emph{2. Estimates near the origin:}}
Now we utilize the additional finite moment assumption
\[ \mu_2 := \int_{( 0, \infty )} \hat{x}^2 \hat{n}_0 ( d x )
          = \int_{( 0, \infty )} x^2 n_0 ( d x ) < \infty \]
and that
\[ 2 = \int_{( 0, \infty )} \hat{x}^2 \hat{n}_* ( x ) dx
     = \int_{( 0, \infty )} x^2 n_* ( x ) dx \]
to obtain approximations for $u_0$ and $u_*$, respectively, near the
origin. By (\ref{eqn:TaylorApproximationZero}),
\begin{equation}
	\label{eqn:K=2EstimateOrigin}
	|u_0(s) - (1 - s)| \leq \frac{\mu_2}{2}|s|^2, \qquad
	|u_*(s) - (1 - s)| \leq |s|^2,
	\hspace{1em} s \in \bar{\mathbb{C}}_+.
\end{equation}

We can also estimate $u(\tau,s)$ near the origin using the time-dependent moment
\[ \hat{m}_2(\tau) = \int_{( 0, \infty )} \hat{x}^2 \hat{n}( \tau, d x )
     = \int_{( 0, \infty )} x^2 n( e^{\tau}, d x ) . \]
From (\ref{eqn:Smoluchowski}) (or (\ref{eqn:SmoluchowskiContinuous}) if $n(t,dx)$
has a density) it follows that $\partial_t{\hat{m}}_2(\tau) = 2 - \hat{m}_2(\tau)$, so
\[ \hat{m}_2(\tau) = 2 + (\mu_2 - 2)e^{-\tau}. \]
Therefore, (\ref{eqn:TaylorApproximationZero}) implies that for all
$\tau \geq 0$, 
\begin{equation}
  \label{eqn:K=2EstimateOrigin2}
	| u(\tau,s) - (1 - s) | \leq \frac{\hat{m}_2(\tau)}{2} |s|^2
	\leq \frac{\mu_2}{2} |s|^2, \hspace{1em} s \in \bar{\mathbb{C}}_+.
\end{equation}

{\emph{3. Propagation of estimates:}}
Consider the r.h.s. of (\ref{eqn:K=2Smoothing1}) with $T = \delta e^{\tau}$,
where $\delta > 0$ is a length scale characterizing the
region about the origin where the difference between $u_0$ and $u_*$ is small
by (\ref{eqn:K=2EstimateOrigin}). Explicitly, define
\begin{equation}
  2 \delta = \sqrt{1 + 2 \left( 1 + \frac{\mu_2}{2} \right)^{- 1}} - 1
\end{equation}
so that $\left( 1 + {\mu_2}/{2} \right) \delta = ( 1 + \delta )^{- 1} / 2$.
As expected, if $\mu_2 \rightarrow \infty$ then $\delta \rightarrow 0$.

{\emph{4. Backward characteristics:}}
Denote the common set of characteristics for $u$ and $u_*$ by $\sigma(\tau;0,\sigma_0)$
with $\sigma_0 \in \bar{\mathbb{C}}_+$. Now define the backward characteristic
$\sigma_0(\tau,\sigma_0)$ as the inverse of the forward mapping $\sigma_0
\mapsto \sigma(\tau; 0, \sigma_0)$. That is, for $\omega \in \bar{\mathbb{C}}_+$,
\begin{equation}
  \label{eqn:K=2BackwardCharacteristics}
  \sigma_0(\tau,\omega) = \sigma(0;\tau,\omega)
\end{equation}
Also note that by (\ref{eqn:K=2Char}) or (\ref{eqn:K=2CharSS}), the Jacobian of the
transformation $\sigma_0 \mapsto \sigma$ is
\begin{equation}
	\label{eqn:K=2CharJacobian}
  \frac{d\sigma}{d\sigma_0} = e^{\tau}.
\end{equation}
We now estimate (\ref{eqn:K=2Smoothing1}) by considering separately the
integral near and away from the origin.

{\emph{5. Bounds for $0 \leq | \sigma | \leq \delta e^{-\tau}$:}} Applying
(\ref{eqn:K=2EstimateOrigin}) and (\ref{eqn:K=2EstimateOrigin2}),
it is simple to see that

\begin{align}
	\left| \int_{0 \leq | \sigma | \leq \delta e^{-\tau} } 
	\frac{e^{\sigma \hat{x}}}{\sigma} ( u ( \tau, \sigma ) - u_* ( \sigma ) )
	d \sigma \right |
	& \leq 2 \int_0^{\delta e^{-\tau}}
	\frac{1}{|\sigma|} | u ( \tau, \sigma ) - u_* ( \sigma ) |
  d |\sigma|
	\nonumber\\
	& \leq \left(1 + \frac{\mu_2}{2} \right) \delta^2 e^{-2 \tau} .
	\label{eqn:K=2Bound1}
\end{align}

{\emph{6. Bounds for $\delta e^{-\tau} \leq | \sigma | \leq \delta e^{\tau}$:}}
Substituting (\ref{eqn:K=2Soln}), (\ref{eqn:K=2SolnSS}) into the r.h.s. of
(\ref{eqn:K=2Smoothing1}) and changing variables to $\sigma_0$ with (\ref{eqn:K=2Char})
and (\ref{eqn:K=2CharJacobian}) we have that
\begin{align}
	& \int_{\delta e^{-\tau} \leq | \sigma | \leq \delta e^{\tau} } 
	\frac{e^{\sigma \hat{x}}}{\sigma} ( u ( \tau, \sigma ) - u_* ( \sigma ) )
	d \sigma
	\nonumber \\
	& = \int_{\delta e^{-2 \tau} \leq | \sigma_0 | \leq \delta} \frac{e^{\sigma (
  \tau ; 0, \sigma_0 ) \hat{x}}}{\sigma ( \tau ; 0, \sigma_0 )}
  \frac{( u_0 ( \sigma_0 ) - u_* ( \sigma_0 ) )}{( 1 - u_* (
  \sigma_0 ) ( 1 - e^{- \tau} ) ) ( 1 - u_0 ( \sigma_0 ) ( 1 - e^{- \tau} ) )}
  d \sigma_0
\end{align}
Since $\sigma_0$ is pure imaginary in the region of integration,
(\ref{eqn:K=2SolnSSExplicit}) implies
\[ |1 - u_* ( \sigma_0 ) ( 1 - e^{- \tau} ) | = \left| \frac{e^{- \tau} +
   \sigma_0}{1 + \sigma_0} \right| \geq \frac{| \sigma_0 |}{1 + \delta} \]
and by (\ref{eqn:K=2EstimateOrigin}),
\[ |1 - u_0 ( \sigma_0 ) ( 1 - e^{- \tau} ) | \geq |1 - u_* ( \sigma_0 )
   ( 1 - e^{- \tau} ) | - \left( 1 + \frac{\mu_2}{2} \right) | \sigma_0 |^2
   \geq \frac{1}{2} \cdot \frac{| \sigma_0 |}{1 + \delta} . \]
Therefore,
\begin{align}
  \left| \int_{\delta e^{-2 \tau} \leq | \sigma_0 | \leq \delta} \frac{e^{\sigma (
  \tau ; 0, \sigma_0 ) \hat{x}}}{\sigma ( \tau ; 0, \sigma_0 )} \right. & \left.
  \frac{( u_0 ( \sigma_0 ) - u_* ( \sigma_0 ) )}{( 1 - u_* (
  \sigma_0 ) ( 1 - e^{- \tau} ) ) ( 1 - u_0 ( \sigma_0 ) ( 1 - e^{- \tau} ) )}
  d \sigma_0 \right| \nonumber\\
  & \leq 4 \left( 1 + \frac{\mu_2}{2} \right) ( 1 + \delta )^2 e^{- \tau}
  \int_{\delta e^{-2 \tau}}^{\delta} \frac{1}{| \sigma_0 |} d | \sigma_0 |
  \nonumber\\
  & = 8 \left( 1 + \frac{\mu_2}{2} \right) ( 1 + \delta )^2 \tau e^{-
  \tau}.
	\label{eqn:K=2Bound2}
\end{align}

{\emph{7. Rate of convergence:}} Substituting (\ref{eqn:K=2Bound1})
and (\ref{eqn:K=2Bound2}) into (\ref{eqn:K=2Smoothing1}), we obtain the desired result
(\ref{eqn:K=2Result}).

\subsection{Near-optimality of the rate for $K = 2$}
\label{subsec:K=2,Optimality}

We demonstrate that the exponential rate is achieved for monodisperse
initial data. Using the explicit solution given in {\cite{Aldous}} the
solution with $K = 2$ and initial data $n_0 ( d x )$ is
\begin{equation}
  \label{eqn:K=2OptimalitySoln} n ( t, d x ) = \sum_{j = 1}^{\infty} t^{- 2}
  \left( 1 - \frac{1}{t} \right)^{j - 1} \delta_j ( d x ), \qquad t \in
  [ 1, \infty ) .
\end{equation}
Here, $\delta_j ( d x )$ is the Dirac measure centered at $j \in \mathbb{N}$.
In similarity variables (\ref{eqn:RescaledSoln}) the distribution function for
$\hat{n} ( \tau, \hat{x} )$ corresponding to (\ref{eqn:K=2OptimalitySoln}) is
\[ F ( \tau, \hat{x} ) = e^{- \tau} \sum_{1 \leq j < e^{\tau} \hat{x}} ( 1 -
   e^{- \tau} )^{j - 1} = 1 - ( 1 - e^{- \tau} )^{\lceil e^{\tau} \hat{x}
   \rceil - 1} \]
where $\lceil \hat{x} \rceil = \min \{ j \in \mathbb{N} : j \geq \hat{x} \}$.
The limiting distribution is
\[ F_* ( \hat{x} ) = 1 - e^{- \hat{x}} . \]
We have that
\begin{equation}
  \label{eqn:K=2OptimalityDifference} F ( \tau, \hat{x} ) - F_* ( \hat{x}
  ) = e^{- \hat{x}} \left( 1 - e^{\alpha ( \tau, \hat{x} )} \right) = e^{-
  \hat{x}} \left( \alpha ( \tau, \hat{x} ) + O ( \alpha^2 ( \tau, \hat{x} ) )
  \right)
\end{equation}
where
\begin{equation}
  \label{eqn:K=2OptimalityExponent} \alpha ( \tau, \hat{x} ) = \hat{x} + ( \lceil
  e^{\tau} \hat{x} \rceil - 1 ) \log ( 1 - e^{- \tau} )
  = O \left( \left(1 + \frac{\hat{x}}{2} \right) e^{-\tau} \right)
\end{equation}
uniformly in $\hat{x}$ as $\tau \to \infty$. Since $0 \leq(1 + \hat{x}/2)e^{-\hat{x}}\leq 1$,
substituting (\ref{eqn:K=2OptimalityExponent}) in (\ref{eqn:K=2OptimalityDifference})
gives the desired result:
\[ \sup_{\hat{x} > 0} |F ( \tau, \hat{x} ) - F_* ( \hat{x} ) | = O ( e^{-\tau} ) . \]

\section{Rate of convergence for the additive kernel $K = x + y$}
\label{sec:K=x+y}

We now consider the case of the additive kernel:

\begin{theorem}
  \label{thm:K=x+y}Let $n_0$ be a positive measure, $\int_{( 0, \infty )} x n_0
  ( d x ) = \int_{( 0, \infty )} x^2 n_0 ( d x ) = 1$, and $\mu_3 =
  \int_{( 0, \infty )} x^3 n_0 ( d x ) < \infty$. With $\hat{n} ( t, d x )$ the
  rescaled solution to Smoluchowski's equation with $K = x + y$ and initial data
  $n_0 ( d x )$ let
  \[ F ( t, \hat{x} ) = \int_{( 0, \hat{x} ]} \hat{y} \hat{n} ( t, d \hat{y} ),
     \qquad F_* ( \hat{x} ) = \int_{( 0, \hat{x} ]} \hat{y}
     \hat{n}_{*, 1} ( \hat{y} ) d \hat{y} . \]
  Then for $t \in [ 0, \infty )$,
  \begin{equation}
    \label{eqn:K=x+yResult} \sup_{\hat{x} > 0} |F ( t, \hat{x} ) - F_* (
    \hat{x} ) | \leq C ( \mu_3 ) ( 1 + t ) e^{- t}
  \end{equation}
  where $C ( \mu_3 )$ is a constant that depends only on $\mu_3$.
\end{theorem}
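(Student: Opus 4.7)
The plan is to follow, essentially line by line, the Fourier-domain strategy of Section \ref{sec:K=2}, with three adaptations for the additive kernel: the Laplace transform is taken against the mass-weighted measure, the origin estimates use the third-moment condition $\mu_3 < \infty$, and the underlying PDE for $u$ is quasilinear rather than linear. Since $\int \hat{y}\,\hat{n}(t,d\hat{y}) = 1$ is conserved under the prescribed rescaling, the measures $\hat{y}\,\hat{n}(t,d\hat{y})$ and $\hat{y}\,\hat{n}_{*,1}(\hat{y})\,d\hat{y}$ are probability measures on $(0,\infty)$. I would define
\[ u(t,s) = \int_{(0,\infty)} e^{-s\hat{y}}\,\hat{y}\,\hat{n}(t,d\hat{y}), \qquad u_*(s) = \int_{(0,\infty)} e^{-s\hat{y}}\,\hat{y}\,\hat{n}_{*,1}(\hat{y})\,d\hat{y} = (1+2s)^{-1/2}, \]
set $u_0(s) = u(0,s)$, and apply Lemma \ref{lem:Smoothing} to reduce (\ref{eqn:K=x+yResult}) to bounding
\[ \sup_{\hat{x}>0}\left|\int_{-iT}^{iT}\frac{e^{\sigma\hat{x}}}{\sigma}\bigl(u(t,\sigma) - u_*(\sigma)\bigr)\,d\sigma\right| \]
with the choice $T = \delta(\mu_3)\,e^{t}$; in particular the smoothing remainder $24/(\pi T)$ is then already $O(e^{-t})$.

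For the near-origin estimates, the matching $\int\hat{y}\,\hat{n}_0 = \int\hat{y}^2\,\hat{n}_0 = 1$ aligns the constant and linear Taylor coefficients of $u_0$ with those of $u_*$. Applying Lemma \ref{lem:TaylorApproximation} with $\gamma = 2$ to each of the two probability measures yields
\[ |u_0(s) - (1-s)| \leq \tfrac{1}{2}\mu_3 |s|^2, \qquad |u_*(s) - (1-s)| \leq \tfrac{3}{2}|s|^2, \qquad s \in \bar{\mathbb{C}}_+. \]
The corresponding time-dependent bound $|u(t,s) - (1-s)| \leq (\hat{m}_3(t)/2)|s|^2$ follows from the ODE for $\hat{m}_3(t) = \int\hat{y}^3\hat{n}(t,d\hat{y})$ obtained by inserting $\phi(y) = y^3$ into (\ref{eqn:Smoluchowski}); using $M_1(t) \equiv 1$ and $M_2(t) = e^{2t}$ together with the rescaling identity $\hat{m}_3(t) = e^{-4t}M_3(t)$, a short computation gives $\hat{m}_3(t) = 3 + (\mu_3 - 3)e^{-t}$, hence $\hat{m}_3(t) \leq \max(3,\mu_3)$ uniformly in $t \geq 0$.

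With these quadratic origin bounds in hand, the remainder of the proof mirrors Steps 3--6 of Section \ref{sec:K=2}: choose $\delta = \delta(\mu_3) > 0$, split the smoothing integral at $|\sigma| = \delta e^{-t}$ and $|\sigma| = \delta e^{t}$, bound the inner piece by $O(e^{-2t})$ using the origin estimates directly, and change variables to the backward characteristic $\sigma_0(t,\sigma)$ on the outer annulus $\delta e^{-t} \leq |\sigma| \leq \delta e^t$ to extract a $t\,e^{-t}$ contribution from the resulting logarithmic radial integral. The main obstacle, and the only genuinely new feature relative to the $K=2$ case, is that the first-order PDE derived in \cite{MP1,MP2} for the additive kernel is Burgers-type: the characteristic speed depends on $u$ itself, so the forward map $\sigma_0 \mapsto \sigma(t;0,\sigma_0)$ associated with $u$ does not coincide with the analogous map for $u_*$. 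I would handle this by a Gr\"onwall-type comparison driven by the quadratic bound on $u - u_*$: for $\delta$ chosen sufficiently small, the two characteristic flows originating in $|\sigma_0| \leq \delta$ remain comparable throughout the time window $[0,t]$, the Jacobian $d\sigma/d\sigma_0$ stays within constants of the $K=2$ factor $e^{t}$, and the denominators appearing in the implicit representation of $u - u_*$ along characteristics are bounded below by a constant times $|\sigma_0|$, exactly as in the display preceding (\ref{eqn:K=2Bound2}). Combining the inner $O(e^{-2t})$, annular $O(t\,e^{-t})$, and smoothing $O(e^{-t})$ contributions then yields (\ref{eqn:K=x+yResult}).
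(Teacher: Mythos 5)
Your setup—Fourier--Laplace transform of the mass measure, Taylor estimates near $s=0$ from $\mu_3 < \infty$, the time-dependent moment $\hat m_3(t) = 3 + (\mu_3-3)e^{-t}$, and Lemma~\ref{lem:Smoothing} with $T \sim \delta e^{t}$—matches the paper's Steps~1--4, and those parts are correct. The gap is in your Step~5. You propose to "change variables to the backward characteristic $\sigma_0(t,\sigma)$ on the outer annulus $\delta e^{-t}\le|\sigma|\le\delta e^{t}$" and then run a Gr\"onwall-type comparison of the two flows. But for the additive kernel the forward characteristic maps $s_0 \mapsto s(t;0,s_0)$ and $s_0^* \mapsto s^*(t;0,s_0^*)$ carry the imaginary axis strictly into the closed left half-plane (the paper quotes Lemma~3.3 of~\cite{MP2}; it follows from $\operatorname{Re}\varphi_0(ik)\ge 0$). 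Consequently the backward image of a purely imaginary $\sigma$ under either flow lies in the open right half-plane, and, more importantly, the backward image under the $u$-flow is a \emph{different} point from the backward image under the $u_*$-flow. Writing $u(t,\sigma)-u_*(\sigma)$ in terms of initial data therefore involves $u_0(s_0)$ and $u_*(s_0^*)$ at two distinct preimages, and there is no single change of variables that turns the line integral over the imaginary annulus into a line integral over a common $\sigma_0$ segment. A Gr\"onwall comparison of the flows tells you $s_0$ and $s_0^*$ are close, but it does not give you a way to rewrite the integral.

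The paper's resolution is the Cauchy-theorem contour deformation that you skip: for each of $u$ and $u_*$ separately, it deforms the imaginary-axis contour $\mathcal{B}\mathcal{C}$ to the curve $AD\subset\Gamma_t$ (respectively $A^*D^*\subset\Gamma_t^*$) that \emph{is} the forward image of a fixed imaginary segment $\mathcal{A}'\mathcal{D}'$, plus two connecting error contours. After pulling each deformed integral back to $\mathcal{A}'\mathcal{D}'$, both integrands now involve a common imaginary $\sigma_0$ but \emph{two distinct} pushed-forward points $s(t;0,\sigma_0)\ne s^*(t;0,\sigma_0)$, so the oscillatory factors $e^{s\hat x}$ and $e^{s^*\hat x}$ differ. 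This forces the $\Lambda_1+\Lambda_2$ split via $|a_1b_1-a_2b_2|\le|b_1-b_2||a_1|+|a_1-a_2||b_2|$, and $\Lambda_2$—which measures $|e^{s\hat x}-e^{s^*\hat x}|$ using the bound (\ref{eqn:K=x+yCharDiff}) on the distance between characteristics—is a genuinely new term with no analogue in the $K=2$ case. You would also need to bound the connecting error contours (paper's Step~6), which requires showing the backward endpoint $s_0(t,iT(t))$ stays in a compact region bounded away from the origin, using that $|u_0|$ is bounded away from $1$ off the origin. As written, the proposal gives no route to any of this; the Gr\"onwall heuristic and "Jacobian within constants of $e^{t}$" are not substitutes for the contour deformation, the $\Lambda_1/\Lambda_2$ decomposition, or the error-contour estimates.
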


\subsection{Proof for the additive kernel}
\label{subsec:K=x+y,Proof}

The analysis for the additive kernel is identical in spirit to that of the
constant kernel but is complicated by the fact that the characteristics of the
Fourier-Laplace transform are no longer rays. We use a contour deformation
argument to overcome this difficulty.

To start, for $s \in \mathbb{\bar{C}}_+$ define
\[ \varphi ( t, s ) = \int_{( 0, \infty )} ( 1 - e^{- s \hat{x}} ) \hat{n} (
   t, d \hat{x} ), \qquad \varphi_* ( s ) = \int_{( 0, \infty )} (
   1 - e^{- s \hat{x}} ) \hat{n}_* ( \hat{x} ) d \hat{x} \]
and let $\varphi_0 ( s ) = \varphi ( 0, s )$. We consider these quantities
instead of standard Laplace transforms since the number measure need not be
integrable near the origin. Another motivation is probabilistic: $\varphi$ and
$\varphi_*$ are \emph{characteristic exponents} of a subordinator with
no drift, as given by the L\'evy-Khintchine formula \cite{Bertoin}. We have that
the Fourier-Laplace transforms of the rescaled mass measures satisfy
\[ u ( t, s ) = \partial_s \varphi ( t, s ) = \int_{( 0, \infty )} e^{- s
   \hat{x}} \hat{x} \hat{n} ( t, d \hat{x} ), \qquad u_* ( s ) =
   \int_{( 0, \infty )} e^{- s \hat{x}} \hat{x} \hat{n}_* ( \hat{x} ) d \hat{x}
\]
with $u_0 ( s ) = u ( 0, s )$. By (\ref{eqn:Smoothing2}),
\begin{equation}
  \label{eqn:K=x+ySmoothing1} \sup_{\hat{x} > 0} |F ( t, \hat{x} ) - F_*
  ( \hat{x} ) | \leq \frac{1}{\pi} \sup_{\hat{x} > 0} \left| \int_{- i T}^{i
  T} \frac{e^{\sigma \hat{x}}}{\sigma} ( u ( t, \sigma ) - u_* ( \sigma )
  ) d \sigma \right| + \frac{24}{\pi T} .
\end{equation}

{\emph{1. Evolution of characteristics:}} The evolution of $\varphi$ and $u$
is given by
\begin{equation}
  \label{eqn:K=x+yPDE1} \partial_t \varphi + ( 2 s - \varphi ) \partial_s
  \varphi = \varphi
\end{equation}
\begin{equation}
  \label{eqn:K=x+yPDE2} \partial_t u + ( 2 s - \varphi ) \partial_s u = - u (
  1 - u ) .
\end{equation}
As with the constant kernel, these equations can be derived from
(\ref{eqn:Smoluchowski}), or from (\ref{eqn:SmoluchowskiContinuous}) if the
initial number measure has a density {\cite{MP2}}. Fixing an initial point $s_0
\in \bar{\mathbb{C}}_+$, the unique solution to
(\ref{eqn:K=x+yPDE1}-\ref{eqn:K=x+yPDE2}) is given in terms of the characteristics
$s ( t ; 0, s_0 )$ which satisfy
\begin{equation}
  \frac{d s}{d t} = 2 s - \varphi,
	\qquad s(0;0,s_0) = s_0 \in \bar{\mathbb{C}}_+.
\end{equation}
Along these characteristic curves we have that
\[ \frac{d \varphi}{d t} = \varphi, \qquad \frac{d u}{d t} = - u ( 1 - u
   ) \]
so that by integrating:
\begin{equation}
  \label{eqn:K=x+ySoln} \varphi ( t, s ) = e^t \varphi_0 ( s_0 ), \qquad
  u ( t, s ) = e^{- t} \frac{u_0 ( s_0 )}{1 - u_0 ( s_0 ) ( 1 - e^{- t} )} .
\end{equation}
These solutions are analytic, as shown in {\cite{MP1}}.
The explicit form for the characteristics $s_0 \mapsto s$, {\emph{which depend on the
initial data through $\varphi_0$}}, is then
\begin{equation}
  \label{eqn:K=x+yChar} s ( t ; 0, s_0 ) = e^{2 t} ( s_0 - \varphi_0 ( s_0 )
  ( 1 - e^{- t} ) ).
\end{equation}
This can be combined with the expression (\ref{eqn:K=x+ySoln}) for $\varphi$
to give
\begin{equation}
	\label{eqn:K=x+ySoln2}
  \frac{\varphi ( t, s )}{s} = e^{- t} \frac{( \varphi_0 ( s_0 ) / s_0 )}{1 -
  ( \varphi_0 ( s_0 ) / s_0 ) ( 1 - e^{- t} )} .
\end{equation}
Note that since $|u_0(s_0)| \leq 1$ and $|\varphi_0(s_0)| \leq |s_0|$ from
(\ref{eqn:K=x+ySoln}) and (\ref{eqn:K=x+ySoln2}) we have
\begin{equation}
  \label{eqn:K=x+yPhiBound} | \varphi ( t, s ) | \leq |s|, \qquad |u (
  t, s ) | \leq 1
\end{equation}
for $s(t;0,s_0)$ such that $s_0 \in \bar{\mathbb{C}}_+$ and $t\geq 0$.

The Fourier-Laplace transform
\begin{equation}
  \label{eqn:K=x+ySolnSSExplicit} u_* ( s ) = \frac{1}{\sqrt{1 + 2 s}}
\end{equation}
of the self-similar solution can also be stated in terms of characteristics.
In contrast to the constant kernel case,
these characteristics differ from those for the solution $u(t,s)$
with initial data $u_0$. Fixing $s^*_0 \in \bar{\mathbb{C}}$, this
distinct set of characteristics $s^*_0 \mapsto s^*$ satisfies
\begin{equation}
  \label{eqn:K=x+yCharSS} s^* ( t ; 0, s_0^* ) = e^{2 t} (
  s^*_0 - \varphi_* ( s_0^* ) ( 1 - e^{- t} ) )
\end{equation}
with $\varphi_* ( s ) = \sqrt{1 + 2 s} - 1$.
Since $u_* ( s )$ is a stationary solution along these characteristics,
we necessarily have that
\begin{equation}
  \label{eqn:K=x+ySolnSS} u_* ( s ) = e^{- t} \frac{u_* ( s_0^*
  )}{1 - u_* ( s_0^* ) ( 1 - e^{- t} )} .
\end{equation}

{\emph{2. Estimates near the origin:}}
We use the additional finite moment assumption
\[ \mu_3 = \int_{( 0, \infty )} \hat{x}^3 \hat{n}_0 ( d x )
         = \int_{( 0, \infty )} x^3 n_0 ( d x ) < \infty \]
along with
\[ 3 = \int_{( 0, \infty )} \hat{x}^3 \hat{n}_* ( x ) dx
     = \int_{( 0, \infty )} x^3 n_* ( x ) dx \]
to approximate $u_0$ and $u_*$ near the origin. By (\ref{eqn:TaylorApproximationZero}),
\begin{equation}
	\label{eqn:K=x+yEstimateOrigin1}
	|u_0(s) - (1 - s)| \leq \frac{\mu_3}{2}|s|^2, \qquad
	|u_*(s) - (1 - s)| \leq \frac{3}{2}|s|^2,
	\hspace{1em} s \in \bar{\mathbb{C}}_+.
\end{equation}
Since $u_0 = \partial_s \varphi_0$ implies
\[ \varphi_0(s) = s \int_0^1 u_0(\beta s) d\beta, \]
together with an analogous expression for $\varphi_*$ these estimates give that
\begin{equation}
	\label{eqn:K=x+yEstimateOrigin2}
	\left| \varphi_0(s) - \left(s - \frac{1}{2}s^2 \right) \right| \leq \frac{\mu_3}{6}|s|^3,
	\qquad \left| \varphi_*(s) - \left(s - \frac{1}{2}s^2 \right) \right| \leq \frac{1}{2}|s|^3,
	\hspace{1em} s \in \bar{\mathbb{C}}_+.
\end{equation}
This yields a bound on the distance between the characteristics $s, s^*$ starting at $\sigma_0 \in \bar{\mathbb{C}}_+$:
\begin{align}
  |s ( t ; 0, \sigma_0 ) - s^* ( t ; 0, \sigma_0 ) | & = e^{2 t} |
  \varphi_0 ( \sigma_0 ) - \varphi_* ( \sigma_0 ) | ( 1 - e^{- t} )
  \nonumber\\
  & \leq \left( \frac{1}{2} + \frac{\mu_3}{6} \right) ( 1 - e^{- t} ) e^{2
  t} | \sigma_0 |^3 .  \label{eqn:K=x+yCharDiff}
\end{align}
With (\ref{eqn:K=x+yPhiBound}) we also have that
\begin{equation}
  \label{eqn:K=x+yCharSSBound} |s^* ( t ; 0, \sigma_0 ) | = e^{2 t}
  \left| \sigma_0 - \varphi_* ( \sigma_0 ) ( 1 - e^{- t} ) \right| \leq
  2 e^{2 t} | \sigma_0 | .
\end{equation}

Finally, we estimate $u(t,s)$ itself near the origin using the
time-dependent moment
\[ \hat{m}_3(t) = \int_{( 0, \infty )} \hat{x}^3 \hat{n}( t, d x )
      = \int_{( 0, \infty )} x^3 n( t, d x ) . \]
It is straightforward from (\ref{eqn:Smoluchowski}) (or
(\ref{eqn:SmoluchowskiContinuous}) if $n(t,dx)$ has a density) that
$\partial_t {\hat{m}}_3(t) = 3 - \hat{m}_3(t)$, so
\[ \hat{m}_3(t) = 3 + (\mu_3 - 3)e^{-t}. \]
Then (\ref{eqn:TaylorApproximationZero}) implies that for all $t \geq 0$,
\begin{equation}
	\label{eqn:K=x+yEstimateOrigin3}
	| u(t,s) - (1 - s) | \leq \frac{\hat{m}_3(t)}{2} |s|^2
	   \leq \frac{\mu_3}{2} |s|^2.
\end{equation}

{\emph{3. Propagation of estimates:}}
Upon inspection of (\ref{eqn:K=x+yChar}) and (\ref{eqn:K=x+yCharSS}) we see
that the natural growth rate of both sets of characteristics is $e^t$ near the
origin, and $e^{2 t}$ at an $O ( 1 )$ distance away from the origin. Explicitly,
for small $\delta > 0$ note that
\[ s(t;0,i \delta) = e^{2t}(i \delta - \varphi_0(i \delta)(1-e^{-t}) )
   \approx e^{2t}\left( i \delta e^{-t} + \frac{1}{2}(i \delta)^2 (1 - e^{-t})
   \right). \]
Bending this point onto the imaginary axis, define
\begin{equation}
  \label{eqn:K=x+yDeformedContourEndpoint} T(t) = e^{2t}\left( \delta e^{-t}
  + \frac{1}{2} \delta^2 (1 - e^{-t}) \right).
\end{equation}
We will bound the r.h.s. of (\ref{eqn:K=x+ySmoothing1}) with this value for $T$,
where $\delta > 0$ is a length scale characterizing the
region about the origin where the difference between $u_0$ and $u_*$ is small by
(\ref{eqn:K=x+yEstimateOrigin1}). To begin, let
$\rho_3 = \max \{ \mu_3, 3 \}$ and define
\begin{equation}
  \label{eqn:K=x+yEstimateLengthScale} \delta = \sqrt{1 +
  \frac{1}{2} \left( \frac{1}{4} + \frac{\rho_3}{6} \right)^{- 1}} - 1
\end{equation}
so that $\left( \frac{1}{4} + \frac{\rho_3}{6} \right) \delta = \frac{1}{2} (
2 + \delta )^{- 1}$. Note that if $\mu_3 \rightarrow \infty$ then $\delta
\rightarrow 0$.

{\emph{4. Backward characteristics:}}
The main difficulty in the proof arises from the fact that characteristics
are no longer rays as in the constant kernel case. We now discuss properties of
the characteristic map which we will use later in the proof.

To begin, let $\Omega_t$ and $\Omega^*_t$ be the images of $\mathbb{C}_+$ under the
maps $s_0 \mapsto s(t;0,s_0)$ and $s_0^* \mapsto s^*(t;0,s_0^*)$ with $t\geq 0$.
By Lemma 3.3 in {\cite{MP2}}, the image $\Gamma_t = \partial \Omega_t$ of the
imaginary axis under $s_0 \mapsto s(t;0,s_0)$ is a curve which passes through the
origin and lies in the closed left half plane $\bar{\mathbb{C}}_-$. This is easily
seen by taking the real part of (\ref{eqn:K=x+yChar}) with $s = ik$ and noting that
\[ \text{Re } \varphi_0 (ik) = \int_{(0,\infty)} (1 - \cos(k x)) n_0(dx) \geq 0 . \]
Similarly, the image $\Gamma^*_t = \partial \Omega^*_t$ of the imaginary axis under
$s_0^* \mapsto s^*(t;0,s_0^*)$ passes through the origin and otherwise lies in the
\emph{open} left half plane $\mathbb{C}_-$ due to the continuity of $n_*(x)$.
Furthermore, it has been shown that $\Gamma_t$ and $\Gamma^*_t$ are
curves on which $\text{Re } s$ is a function of $\text{Im } s$.

We also have by {\cite{MP2}}, Lemma 3.3(iii) that for $t \geq 0$,
$s_0 \mapsto s(t;0,s_0)$ is one-to-one from $\bar{\mathbb{C}}_+$ onto $\bar{\Omega}_t$.
It is analytic in $\mathbb{C}_+$, and its inverse map $s \mapsto s_0$ is analytic in
$\Omega_t$. In particular, this yields the analyticity of $u(t,s)$ in $\Omega_t$
(and analyticity of $u_*(s)$ in $\Omega^*_t$ by substituting $s$ with $s^*$). As
the positive real axis is invariant under $s_0 \mapsto s(t;0,s_0)$, the continuity
of the mapping implies that characteristics starting in the upper or lower half of
$\mathbb{C}_+$ remain in the upper or lower half of the complex plane, respectively.

Next, define the backward characteristics $s_0(t,s)$ and $s_0^*(t,s)$ as inverses
of the forward mappings $s_0 \mapsto s(t; 0, s_0)$ and
$s_0^* \mapsto s^*(t; 0, s_0^*)$, respectively. Explicitly, for
$\omega \in \bar{\Omega}_t$ and $\omega^* \in \bar{\Omega}^*_t$,
\begin{equation}
  \label{eqn:K=x+yBackwardCharacteristics}
  s_0(t,\omega) = s(0;t,\omega), \qquad s_0^*(t,\omega^*) = s^*(0;t,\omega^*).
\end{equation}
By (\ref{eqn:K=x+yChar}) and (\ref{eqn:K=x+yCharSS}), the Jacobians of the
transformations $s_0 \mapsto s$ and $s \mapsto s^*$ are
\begin{equation}
	\label{eqn:K=x+yCharJacobian}
	\frac{d s}{d s_0} = e^{2 t} ( 1 - u_0 ( s_0 ) ( 1 - e^{- t} ) ), \qquad
	 \frac{d s^*}{d s_0^*} = e^{2 t} ( 1 -  u_* ( s_0^* ) ( 1 - e^{- t} ) ) .
\end{equation}

We will now estimate (\ref{eqn:K=x+ySmoothing1}) by considering separately the integral
near and away from the origin. In doing so, we use the following convention for
the remainder of the proof. The independent variable $\sigma$ is used
solely for points on the imaginary axis in the $s,s^*$-plane, while $\sigma_0$
is used analogously in the $s_0,s_0^*$-plane.

{\emph{5. Bounds for $0 \leq | \sigma | \leq \delta e^{-t}$:}} Applying
(\ref{eqn:K=x+yEstimateOrigin1}) and (\ref{eqn:K=x+yEstimateOrigin3}), we get that

\begin{align}
	\left| \int_{0 \leq | \sigma | \leq \delta e^{-t} } 
	\frac{e^{\sigma \hat{x}}}{\sigma} ( u ( t, \sigma ) - u_* ( \sigma ) )
	d \sigma \right |
	& \leq 2 \int_0^{\delta e^{-t}}
	\frac{1}{|\sigma|} | u ( t, \sigma ) - u_* ( \sigma ) |
  d |\sigma|
	\nonumber\\
	& \leq \left(\frac12 + \frac{\mu_3}{6} \right) \delta^2 e^{-2t} .
	\label{eqn:K=x+yBound1}
\end{align}

{\emph{6. Bounds for $\delta e^{-t} \leq | \sigma | \leq T(t)$:}}
Since the flow of characteristics no longer leaves the imaginary axis
invariant, we use a contour deformation argument (see {\cite{MP2}} for
a similar argument).

{\emph{6a. Contour deformation:}}
Without loss of generality, we work in the upper half of the complex plane.
Bounds for quantities in the lower half plane are identical since
$u(t,\bar{s}) = \overline{u(t,s)}$ and $u_*(\bar{s}) = \overline{u_*(s)}$.
Take
\[ \mathcal{A}' = i \delta e^{-2t}, \qquad \mathcal{D}' = i \delta \]
to be the endpoints of a piece of the imaginary axis in the $s_0,s_0^*$-
plane (see Figure \ref{fig:K=x+yContourDeformation}). Let $AD$ and $A^*D^*$
be the images of $\mathcal{A}'\mathcal{D}'$ under the mappings
$s_0 \mapsto s(t;0,s_0)$ and $s_0^* \mapsto s^*(t;0,s_0^*)$.
As discussed in step 4, $AD \subset \Gamma_t$ and $A^*D^* \subset \Gamma^*_t$
lie in the left half of the complex plane.

The part of (\ref{eqn:K=x+ySmoothing1}) that remains to be estimated is
an integral over the imaginary axis in the $s,s^*$-plane, bounded
away from the origin, with endpoints
\[ \mathcal{B} = i \delta e^{-t}, \qquad \mathcal{C} = i T(t). \]
To do so, split the integral into its individual components
\begin{equation}
  \int_{\mathcal{B}\mathcal{C}} \frac{e^{\sigma \hat{x}}}{\sigma} ( u ( t, \sigma ) -
  u_* ( \sigma ) ) d \sigma = \int_{\mathcal{B}\mathcal{C}} \frac{e^{\sigma
  \hat{x}}}{\sigma} u ( t, \sigma ) d \sigma - \int_{\mathcal{B}\mathcal{C}} \frac{e^{\sigma
  \hat{x}}}{\sigma} u_* ( \sigma ) d \sigma .
\end{equation}
Since $u$ and $u_*$ are analytic in $\Omega_t$ and $\Omega^*_t$, the integrands
are analytic away from the origin and we can apply Cauchy's theorem:
\begin{equation}
  \label{eqn:K=x+yContourDeformation1}
	\int_{\mathcal{B}\mathcal{C}} \frac{e^{\sigma \hat{x}}}{\sigma} u ( t, \sigma ) d \sigma
	= \int_{AD} \frac{e^{s \hat{x}}}{s} u ( t, s ) d s
	+ \int_{\mathcal{B}A} \frac{e^{s \hat{x}}}{s} u ( t, s ) d s
	+ \int_{D\mathcal{C}} \frac{e^{s \hat{x}}}{s} u ( t, s ) d s,
\end{equation}
and
\begin{equation}
  \label{eqn:K=x+yContourDeformation2}
	\int_{\mathcal{B}\mathcal{C}} \frac{e^{\sigma \hat{x}}}{\sigma} u_* ( \sigma ) d \sigma
	= \int_{A^*D^*} \frac{e^{s^*  \hat{x}}}{s^*} u_* ( s^* ) d s^*
	+ \int_{\mathcal{B}A^*} \frac{e^{s^*  \hat{x}}}{s^*} u_* ( s^* ) d s^*
	+ \int_{D^*\mathcal{C}} \frac{e^{s^*  \hat{x}}}{s^*} u_* ( s^* ) d s^* .
\end{equation}
The error terms are integrals over contours that connect the endpoints of $\mathcal{B}\mathcal{C}$
to those of $AD$ and $A^*D^*$, respectively. We take $\mathcal{B}A$ and $\mathcal{B}A^*$ to
be straight line contours. Additionally, let $D\mathcal{C}$ and $ D^*\mathcal{C}$ be
such that their backward images
\begin{equation}
  \mathcal{D}'C' = s_0 ( t, D\mathcal{C} ),
	\qquad \mathcal{D}'{C^*}' = s_0^* (t, D^*\mathcal{C})
\end{equation}
are straight lines. We will estimate the integrals over these contours in (\ref{eqn:K=x+yContourDeformation1}-\ref{eqn:K=x+yContourDeformation2}) at the end of the
section.

\begin{figure}[htb]
	\center{\includegraphics[scale=0.6]{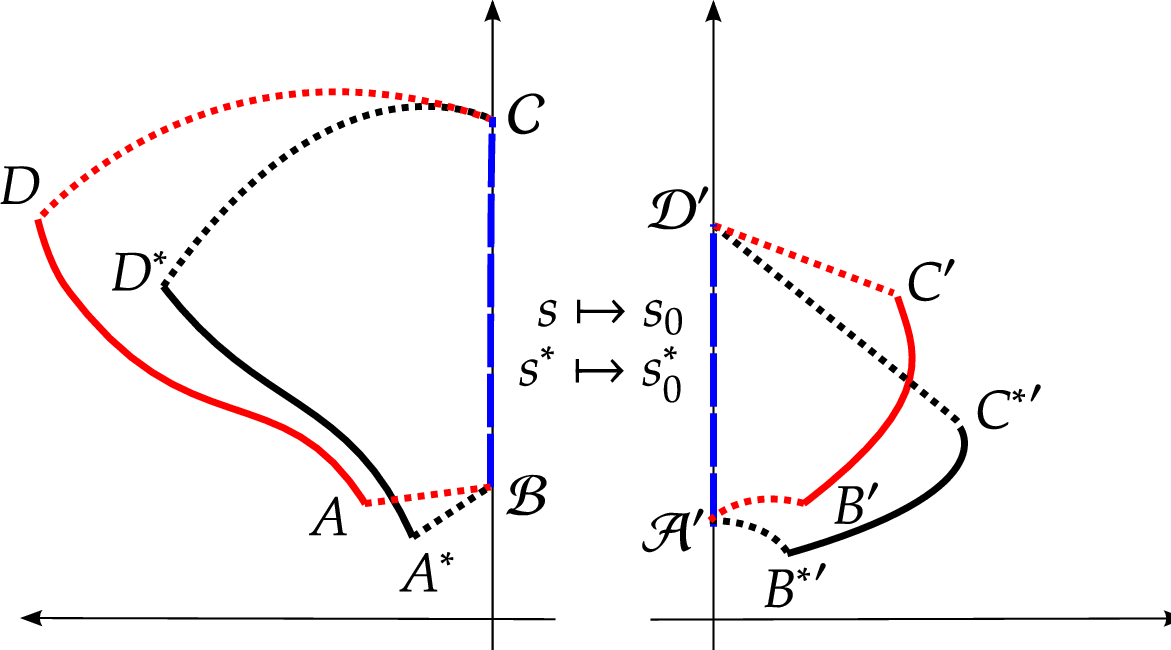}}
	\caption{\label{fig:K=x+yContourDeformation}
	Endpoints of contours on imaginary axis are
	$\mathcal{B} = i \delta e^{-t}, \mathcal{C} = i T(t)$
	and $\mathcal{A}' = i \delta e^{-2t}, \mathcal{D}' = i \delta$.
	On the left,
	$A = s(t;0, \mathcal{A}'), A^* = s^*(t;0, \mathcal{A}'),
	D = s(t;0, \mathcal{D}'), D^* = s^*(t;0, \mathcal{D}')$.
  On the right,
	$B' = s_0(t, \mathcal{B}), {B^*}' = s_0^*(t, \mathcal{B}),
	C' = s_0(t, \mathcal{C}), {C^*}' = s_0^*(t, \mathcal{C})$.
	}
\end{figure}

Making the change of variables $s \mapsto s_0$ and
$s^* \mapsto s_0^*$ and using (\ref{eqn:K=x+ySoln}),
(\ref{eqn:K=x+ySolnSS}), and (\ref{eqn:K=x+yCharJacobian}), we have:
\begin{align}
  & \int_{AD} \frac{e^{s \hat{x}}}{s} u ( t, s ) d s - \int_{A^*D^*}
  \frac{e^{s^*  \hat{x}}}{s^*} u_* ( s^* ) d s^*
  \nonumber\\
  & = e^{- t} \int_{\mathcal{A}'\mathcal{D}'} \left( e^{s ( t ; 0, \sigma_0 ) \hat{x}}
  \frac{e^{2 t}}{s ( t ; 0, \sigma_0 )} u_0 ( \sigma_0 ) - e^{s^* ( t ;
  0, \sigma_0 ) \hat{x}} \frac{e^{2 t}}{s^* ( t ; 0, \sigma_0 )} u_*
  ( \sigma_0 ) \right) d \sigma_0 .  \label{eqn:K=x+yMainEstimate}
\end{align}
Using the basic inequality $|a_1 b_1 - a_2 b_2 | \leq |b_1 - b_2 ||a_1 | +
|a_1 - a_2 ||b_2 |$, we can bound the magnitude of
(\ref{eqn:K=x+yMainEstimate}) by the sum $\Lambda_1 + \Lambda_2$, where
\begin{equation}
  \label{eqn:K=x+yMainTerm1} \Lambda_1 = e^{-t} \int_{\delta e^{-2t}}^{\delta}
  \left| \frac{e^{2 t}}{s} u_0 ( \sigma_0 ) - \frac{e^{2 t}}{s^*} u_* (
  \sigma_0 ) \right| |e^{s \hat{x}} |d| \sigma_0 |
\end{equation}
\begin{equation}
  \label{eqn:K=x+yMainTerm2} \Lambda_2 = e^{-t} \int_{\delta e^{-2t}}^{\delta}
  |e^{s \hat{x}} - e^{s^* \hat{x}} | \left| \frac{e^{2 t}}{s^*} u_* (
  \sigma_0 ) \right| d| \sigma_0 |
\end{equation}
We now evaluate each of these terms. For $\Lambda_1$ the argument is nearly
identical to that presented in Section \ref{sec:K=2}.

{\emph{6b. Bounds for $\Lambda_1$:}}
Since $s \in \bar{\mathbb{C}}_-$, $|e^{s \hat{x}} | \leq
1$. Using the explicit form (\ref{eqn:K=x+yChar}), (\ref{eqn:K=x+yCharSS}) of
the characteristics, we rearrange terms in the integrand of
(\ref{eqn:K=x+yMainTerm1}) to get
\begin{align}
  \left| \frac{e^{2 t}}{s} u_0 ( \sigma_0 ) \right. & \left. - \frac{e^{2
  t}}{s^*} u_* ( \sigma_0 ) \right| \nonumber\\
  & = \left| \frac{1}{\sigma_0} \frac{\left( \frac{e^{- 2 t}
  s^*}{\sigma_0} u_0 ( \sigma_0 ) - \frac{e^{- 2 t} s}{\sigma_0} u_*
  ( \sigma_0 ) \right)}{\left( 1 - \frac{\varphi_* ( \sigma_0
  )}{\sigma_0} ( 1 - e^{- t} ) \right) \left( 1 - \frac{\varphi_0 ( \sigma_0
  )}{\sigma_0} ( 1 - e^{- t} ) \right)} \right| . 
  \label{eqn:K=x+yMainTerm1Rewrite}
\end{align}
Notice that here $\varphi_0 ( \sigma_0 ) / \sigma_0$ and $\varphi_* ( \sigma_0
) / \sigma_0$ play the same role as $u_0 ( \sigma_0 )$ and $u_* (
\sigma_0 )$ do in (\ref{eqn:K=2Bound2}).
Using that $|u_* | \leq 1$ and the estimates (\ref{eqn:K=x+yCharDiff}),
(\ref{eqn:K=x+yCharSSBound}), the numerator in (\ref{eqn:K=x+yMainTerm1Rewrite})
satisfies
\begin{align}
  \left| \frac{e^{- 2 t} s^*}{\sigma_0} u_0 ( \sigma_0 ) \right. &
  \left. - \frac{e^{- 2 t} s}{\sigma_0} u_* ( \sigma_0 ) \right|
  \nonumber\\
  & \leq \left| \frac{e^{- 2 t} s^*}{\sigma_0} - \frac{e^{- 2 t}
  s}{\sigma_0} \right| |u_* ( \sigma_0 ) | + |u_0 ( \sigma_0 ) - u_*
  ( \sigma_0 ) | \left| \frac{e^{- 2 t} s^*}{\sigma_0} \right|
  \nonumber\\
  & \leq 7 \left( \frac{1}{2} + \frac{\mu_3}{6} \right) | \sigma_0 |^2 . 
  \label{eqn:K=x+yMainTerm1Num}
\end{align}

To estimate the denominator of (\ref{eqn:K=x+yMainTerm1Rewrite}), we replace
$\varphi_0(\sigma_0)$ and $\varphi_*(\sigma_0)$ by $2\sigma_0 /(2 + \sigma_0)$.
Since
\[ \left| \frac{2s}{2 + s}  - \left(s - \frac{1}{2}s^2 \right) \right| \leq
   \frac{1}{4}|s|^3, \hspace{1em} s \in \bar{\mathbb{C}}_+,  \]
with (\ref{eqn:K=x+yPhiBound}), (\ref{eqn:K=x+yEstimateOrigin2}), and
(\ref{eqn:K=x+yEstimateLengthScale}) this implies that
\begin{align}
  \left| 1 - \frac{\varphi_0 ( \sigma_0 )}{\sigma_0} ( 1 - e^{- t} ) \right|
  \geq \left| \frac{2 e^{- t} + \sigma_0}{2 + \sigma_0} \right| & - \left(
  \frac{1}{4} + \frac{\mu_3}{6} \right) | \sigma_0 |^2 \nonumber\\
  & \geq \frac{| \sigma_0 |}{2 + \delta} - \left( \frac{1}{4} +
  \frac{\mu_3}{6} \right) | \sigma_0 |^2 \geq \frac{1}{2} \cdot \frac{|
  \sigma_0 |}{2 + \delta} .  \label{eqn:K=x+yMainTerm1DeltaBound2}
\end{align}
The same argument holds with $\varphi_0$ replaced by $\varphi_*$ and
$\mu_3$ replaced by $3$, but $\delta$ unchanged:
\begin{align}
  \left| 1 - \frac{\varphi_* ( \sigma_0 )}{\sigma_0} ( 1 - e^{- t} )
  \right| \geq \frac{1}{2} \cdot \frac{| \sigma_0
  |}{2 + \delta} . \label{eqn:K=x+yMainTerm1DeltaBound}
\end{align}
Therefore, using
(\ref{eqn:K=x+yMainTerm1DeltaBound}-\ref{eqn:K=x+yMainTerm1DeltaBound2}) and
(\ref{eqn:K=x+yMainTerm1Num}) in (\ref{eqn:K=x+yMainTerm1Rewrite}) yields
\begin{align}
  e^{- t} \int_{\delta e^{- 2t} \leq | \sigma_0 | \leq \delta} \left|
  \frac{e^{2 t}}{s} u_0 ( \sigma_0 ) \right. & \left. - \frac{e^{2
  t}}{s^*} u_* ( \sigma_0 ) \right| |e^{s \hat{x}} |d|
  \sigma_0 | \nonumber\\
  & \leq 28 \left( \frac{1}{2} + \frac{\mu_3}{6} \right) ( 2 +
  \delta )^2 e^{- t} \int_{\delta e^{- 2t}}^{\delta} \frac{1}{| \sigma_0 |} d |
  \sigma_0 | \nonumber\\
  & = 56 \left( \frac{1}{2} + \frac{\mu_3}{6} \right) ( 2 + \delta
  )^2 t e^{- t} .  \label{eqn:K=x+yMainTerm1Estimate2}
\end{align}

{\emph{6c. Bounds for $\Lambda_2$:}} Now we consider $\Lambda_2$. Using
$|u_* | \leq 1$ and (\ref{eqn:K=x+yCharSS}) we can rewrite this as
\begin{equation}
  \label{eqn:K=x+yMainTerm2Rewrite} \Lambda_2 \leq e^{- t}
  \int_{\delta e^{-2t}}^{\delta}
  \frac{|e^{s \hat{x}} - e^{s^* \hat{x}} |}{| \sigma_0 |}
  \frac{1}{\left| 1 - \frac{\varphi_* ( \sigma_0 )}{\sigma_0} ( 1 - e^{-
  t} ) \right|} d| \sigma_0 |.
\end{equation}
As noted in step 4, since $\sigma_0 = ik$ with $k \in \mathbb{R}$ nonzero, we have that
$\text{Re } s^* < 0$. This implies that $\sup_{\hat{x} > 0} | \hat{x} e^{s^* \hat{x}} |$
is bounded above by $1 / |\text{Re } s^*|$. Then
\begin{align}
  |e^{s \hat{x}} - e^{s^* \hat{x}}|
	= |e^{s^* \hat{x}}||e^{(s-s^*) \hat{x}} - 1|
  & \leq |s-s^*|\sup_{\hat{x} > 0} | \hat{x} e^{s^* \hat{x}} | \nonumber\\
  & \leq \frac{|s-s^*|}{|\text{Re } s^*|} \leq \left( \frac{1}{2} + \frac{\mu_3}{6} \right)
	\frac{|\sigma_0|^3}{|\text{Re } \varphi_* ( \sigma_0 ) |} , \label{eqn:K=x+yMainTerm2Num}
\end{align}
where the last inequality follows from (\ref{eqn:K=x+yCharSS}) and (\ref{eqn:K=x+yCharDiff}).
We now make use of the explicit form $\varphi_* ( \sigma_0 ) = \sqrt{1 + 2
\sigma_0} - 1$ obtained from (\ref{eqn:K=x+ySolnSSExplicit}). Since
$re^{i\theta} = 1 + 2ik \in \mathbb{C}_+$ with $r = \sqrt{1 + 4k^2}$ and $|\theta| < \pi/2$,
\[ \text{Re } \sqrt{1 + 2ik} = \sqrt{r} \cos \left( \frac{\theta}{2} \right)
= \sqrt{\frac{1}{2} (r+1)} \]
by using a double-angle formula and $r \cos(\theta) = 1$. To summarize,
\[ \text{Re } \varphi_*(\sigma_0)
= \sqrt{\frac{1}{2} \left( 1 + \sqrt{1 + 4|\sigma_0|^2} \right)} - 1 . \]
Defining $g(y) = \sqrt{\frac12 \left(1 + \sqrt{1+y} \right)} - 1$,
a routine calculation shows that $g'''(y) > 0$ for all $y>0$. It follows by Taylor's theorem
that $g(y) \geq y/8 - 5y^2/128$. Therefore, if $|\sigma_0| \leq \delta$
then $\text{Re } \varphi_*(\sigma_0) = g \left( 4|\sigma_0|^2 \right)$ satisfies
\begin{equation}
  \label{eqn:K=x+yMainTerm2EstimateRealPart} | \text{Re } \varphi_*(\sigma_0 ) |
	\geq \left(\frac{1}{2} - \frac{5}{8}\delta^2 \right) | \sigma_0^2 |
	\geq \frac{1}{4} | \sigma_0^2 |
\end{equation}
since $\delta^2 < 2/5$ by (\ref{eqn:K=x+yEstimateLengthScale}).
Combining (\ref{eqn:K=x+yMainTerm2Num}) and (\ref{eqn:K=x+yMainTerm2EstimateRealPart})
yields
\begin{equation}
  \label{eqn:K=x+yMainTerm2PreEstimate} \frac{|e^{s \hat{x}} - e^{s^*
  \hat{x}} |}{| \sigma_0 |} \leq 4 \left( \frac{1}{2} + \frac{\mu_3}{6}
  \right) .
\end{equation}

For the remaining term in the integrand of (\ref{eqn:K=x+yMainTerm2Rewrite}) we use the
estimate (\ref{eqn:K=x+yMainTerm1DeltaBound}). Putting this together with
(\ref{eqn:K=x+yMainTerm2PreEstimate}) we conclude that
\begin{align}
  e^{- t} \int_{\delta e^{- 2t}}^{\delta} \frac{|e^{s \hat{x}} - e^{s^*
  \hat{x}} |}{| \sigma_0 |} & \frac{1}{\left| 1 -
  \frac{\varphi_* ( \sigma_0 )}{\sigma_0} ( 1 - e^{- t} ) \right|}
  d | \sigma_0 | \nonumber\\
  & \leq 8 \left( \frac{1}{2} + \frac{\mu_3}{6} \right) ( 2 +
  \delta ) e^{- t} \int_{\delta e^{- 2t}}^{\delta} \frac{1}{| \sigma_0 |} d |
  \sigma_0 | \nonumber\\
  & = 16 \left( \frac{1}{2} + \frac{\mu_3}{6} \right) ( 2 + \delta
  ) t e^{- t} .  \label{eqn:K=x+yMainTerm2Estimate2}
\end{align}

{\emph{6d. Bounds for $\Lambda_1 + \Lambda_2$:}}
Combining (\ref{eqn:K=x+yMainTerm1Estimate2}) and
(\ref{eqn:K=x+yMainTerm2Estimate2}), we obtain a bound for
(\ref{eqn:K=x+yMainTerm1}-\ref{eqn:K=x+yMainTerm2}):
\begin{equation}
  \label{eqn:K=x+yLambdaEstimates} \Lambda_1 + \Lambda_2 \leq 64 \left(
  \frac{1}{2} + \frac{\mu_3}{6} \right) ( 2 + \delta )^2 te^{- t} .
\end{equation}

{\emph{7. Error terms:}}
Lastly, we bound the two error terms in (\ref{eqn:K=x+yContourDeformation1})
obtained from the contour deformation argument. The corresponding error terms
in (\ref{eqn:K=x+yContourDeformation2}) are bounded by an argument identical
to the one given below with only constants differing.

First we consider the term
\begin{equation}
	\label{eqn:K=x+yErrorTerm1}
	\left| \int_{\mathcal{B}A} \frac{e^{s \hat{x}}}{s} u ( t, s ) d s \right|
	\leq |\mathcal{B}A| \sup_{\mathcal{B}A} \left( |s|^{- 1} \right) .
\end{equation}
Here we have used that for $s \in \bar{\Gamma}_t$, both $|e^{s\hat{x}}|$ and
$|u ( t, s )|$ are bounded by 1. From the explicit expression for the endpoints
we have that
\[	|\mathcal{B}A| = | e^{2t}(i \delta e^{-2t} - \varphi_0(i \delta e^{-2t})(1 - e^{-t}))
	- i \delta e^{-t} | .\]
Substituting $\varphi_0(i \delta e^{-2t})$ with $i \delta e^{-2t}$ and using the bound
\begin{equation}
	\label{eqn:K=x+yError1Bound1}
	|\varphi_0(s_0) - s_0| \leq \frac12 |s_0|^2, \hspace{1em} s_0 \in \bar{\mathbb{C}}_+
\end{equation}
derived from (\ref{eqn:TaylorApproximationZero}), we find that
$|\mathcal{B}A| \leq \frac12 \delta^2 e^{-2t}$.

For the remaining term, write $s \in \mathcal{B}A$ as an interpolation of the
endpoints:
\[ s = \beta (e^{2t}(i \delta e^{-2t} - \varphi_0(i \delta e^{-2t})(1 - e^{-t})))
			+ (1 - \beta)i \delta e^{-t}, \qquad \beta \in [0,1] . \]
Again, substituting $\varphi_0(i \delta e^{-2t})$ by $i \delta e^{-2t}$ and using
(\ref{eqn:K=x+yError1Bound1}) implies that for $s \in \mathcal{B}A$,
\[ |s| \geq \delta e^{-t} - \frac{\beta}{2}\delta^2 e^{-2t} \geq \frac12 \delta e^{-t} . \]
Putting this together with the previous bound, we conclude that
\begin{equation}
  \label{eqn:K=x+yErrorEstimate1} \left| \int_{\mathcal{B}A} \frac{e^{s
  \hat{x}}}{s} u ( t, s ) d s \right| \leq \delta e^{-t} .
\end{equation}

Now we estimate the second error term. Changing variables from $s$ to $s_0$,
\begin{align}
  \left| \int_{D\mathcal{C}} \frac{e^{s \hat{x}}}{s} u ( t, s ) d s \right|
  & = e^{- t} \left| \int_{\mathcal{D}'C'} \frac{e^{s ( t ; 0, s_0 ) \hat{x}}}{s_0}
  \frac{u_0 ( s_0 )}{1 - \frac{\varphi_0 ( s_0 )}{s_0} ( 1 - e^{- t} )} d s_0
  \right| \nonumber\\
  & \leq e^{- t} |\mathcal{D}'C'| \sup_{\mathcal{D}'C'} \left( |s_0 |^{- 1} \right) \cdot
  \sup_{\mathcal{D}'C'} \left( \left| 1 - \frac{\varphi_0 ( s_0 )}{s_0} ( 1 - e^{- t} )
  \right|^{- 1} \right) . \label{eqn:K=x+yErrorTerm2}
\end{align}
The endpoints of the line segment $\mathcal{D}'C'$
are $i \delta$ and $\pi_0 := s_0 (t, i T ( t ) )$. By (\ref{eqn:K=x+yChar}) and
(\ref{eqn:K=x+yDeformedContourEndpoint}), the latter satisfies
\begin{equation}
  \label{eqn:K=x+yEndpointEqn}
  i\left(\delta e^{-t} + \frac12 \delta^2 (1 - e^{-t}) \right)
			= \pi_0 - \varphi_0(\pi_0)( 1 - e^{-t} ) .
\end{equation}
Properties of the characteristic map (see step 4) imply that this equation
has a unique solution in the upper half of $\bar{\mathbb{C}}_+$.
Rewriting the r.h.s. as $\pi_0 e^{-t} + (\pi_0 - \varphi_0(\pi_0))(1 - e^{-t})$,
we take absolute values and use (\ref{eqn:K=x+yError1Bound1}) to get that
\[ \delta e^{-t} + \frac12 \delta^2 (1 - e^{-t})
			\leq |\pi_0| e^{-t} + \frac12 |\pi_0|^2 (1 - e^{-t}). \]
Since $h(y) = \beta y + (1 - \beta)\frac12 y^2$ is strictly increasing for positive
$y$ and any $\beta \in [0,1]$, this implies that $|\pi_0| \geq \delta$.
The endpoints of the line $\mathcal{D}'C'$ are therefore at least a distance
$\delta$ away from the origin, so
$\sup_{\mathcal{D}'C'} |s_0 |^{-1} \leq \sqrt{2}\delta^{- 1} = : C_1 ( \delta ) $.

We also have an upper bound for $|\pi_0 |$ as follows. Since the mass measure
$\hat{x}\hat{n}_0(d \hat{x})$ is not concentrated at $\hat{x} = 0$, $|u_0|$ is
bounded away from $1$ in an annulus about the origin. Then there exists an
$\varepsilon = \varepsilon ( \delta ) > 0$ such that for all $s_0 \in \mathcal{D}'C'$,
\[ \left| \frac{\varphi_0 ( s_0 )}{s_0} \right|
  \leq \int_0^1 |u_0 ( \beta s_0 ) |d \beta \leq 1 - \varepsilon . \]
This shows that
\begin{equation}
  \label{eqn:K=x+yError2Bound1}
	\sup_{\mathcal{D}'C'} \left( \left| 1 - \frac{\varphi_0 ( s_0 )}{s_0} ( 1 - e^{- t} )
   \right|^{- 1} \right) \leq \varepsilon^{- 1} .
\end{equation}
Rewriting the r.h.s. of (\ref{eqn:K=x+yEndpointEqn}) as
$\pi_0 \left(1 - (\varphi_0(\pi_0) / \pi_0)(1 - e^{-t}) \right)$,
we find
\[ |\pi_0| = \left(\delta e^{-t} + \frac12 \delta^2 (1 - e^{-t}) \right)
			\left|1 - \frac{\varphi_0(\pi_0)}{\pi_0}(1 - e^{-t}) \right|^{-1}
			\leq \frac{\delta + \frac12 \delta^2}{\varepsilon} . \]
Thus $|\mathcal{D}'C'|$ is bounded above by some constant $C_2 ( \delta )$.

Finally, putting these estimates together in
(\ref{eqn:K=x+yErrorTerm2}) yields
\begin{equation}
  \label{eqn:K=x+yErrorEstimate2} \left| \int_{D\mathcal{C}} \frac{e^{s
  \hat{x}}}{s} u ( t, s ) d s \right| \leq C_3 ( \delta ) e^{- t} .
\end{equation}
where $C_3 : = C_1 C_2 \varepsilon^{- 1}$ depends on the initial data
$u_0$ only through $\mu_3$.

{\emph{8. Rate of convergence:}} Using (\ref{eqn:K=x+yBound1}) and
(\ref{eqn:K=x+yLambdaEstimates}) in (\ref{eqn:K=x+ySmoothing1})
with the bounds (\ref{eqn:K=x+yErrorEstimate1}) and \ref{eqn:K=x+yErrorEstimate2}) for
the error terms, we obtain the desired result (\ref{eqn:K=x+yResult}).

\subsection{Monodisperse initial data with $K = x + y$}
\label{subsec:K=x+y,Optimality}

As shown in {\cite{Aldous}}, the explicit solution with $K = x + y$ and
initial data $n_0 ( d x ) = \delta_1 ( d x )$ can be given in terms of the
Borel-Tanner distribution
\[ B ( \lambda, j ) = \lambda^{j - 1} \frac{j^{j - 1} e^{- \lambda j}}{j!},
   \qquad 0 \leq \lambda \leq 1 \]
by
\begin{equation}
  \label{eqn:K=x+yOptimalitySoln} n ( t, d x ) = \sum_{j = 1}^{\infty} e^{- t}
  B ( 1 - e^{- t}, j ) \delta_j ( d x ), \qquad t \in [ 0, \infty ) .
\end{equation}
In the probabilistic setting, $B ( \lambda, j )$ is the distribution of the total
population size $X_{\lambda}$ of a Galton-Watson branching process with one
progenitor and Poisson($\lambda$) offspring distribution. Note that
\[ \mathbb{E} [ X_{\lambda} ] = \sum_{j = 1}^{\infty} j B ( \lambda, j ) = ( 1 -
   \lambda )^{- 1} . \]
In similarity variables (\ref{eqn:RescaledSoln}) the distribution function for
$\hat{x} \hat{n} ( t, d x )$ corresponding to (\ref{eqn:K=x+yOptimalitySoln})
is
\begin{equation}
  \label{eqn:K=x+yOptimalityDistFcn} F ( t, \hat{x} ) = e^{- 2 t} \sum_{1 \leq
  j < e^{2 t} \hat{x}} e^t ( 1 - e^{- t} )^{j - 1} \frac{j^j e^{- ( 1 - e^{-
  t} ) j}}{j!} .
\end{equation}
and the limiting distribution is
\begin{equation}
  \label{eqn:K=x+yOptimalityLimDistFcn} F_* ( \hat{x} ) =
  \int_0^{\hat{x}} \frac{1}{\sqrt{2 \pi}} \hat{y}^{- 1 / 2} e^{- \hat{y} / 2}
  d \hat{y} .
\end{equation}

We consider now $\sup_{\hat{x} > 0} |F ( t, \hat{x} ) - F_* ( \hat{x} )
|$. Substituting Stirling's formula for the factorial in
(\ref{eqn:K=x+yOptimalityDistFcn}) and multiplying by an extra factor of $1 -
e^{- t}$, define
\[ \Phi ( t, \hat{x} ) = e^{- 2 t} \sum_{1 \leq j < e^{2 t} \hat{x}} ( 1 -
   e^{- t} )^j \frac{j^j e^{- ( 1 - e^{- t} ) j}}{\sqrt{2 \pi} j^{j +
   \frac{1}{2}} e^{- j}} . \]
In addition, define the Riemann sum corresponding to
(\ref{eqn:K=x+yOptimalityLimDistFcn}) by
\[ \Phi_* ( t, \hat{x} ) = e^{- 2 t} \sum_{1 \leq j < e^{2 t} \hat{x}}
   \frac{1}{\sqrt{2 \pi}} ( e^{- 2 t} j )^{- 1 / 2} e^{- ( e^{- 2 t} j ) / 2}
   . \]
Each of the differences
\begin{align}
  & D_1 = F ( t, \hat{x} ) - \Phi ( t, \hat{x} ) \nonumber\\
  & D_2 = \Phi ( t, x ) - \Phi_* ( t, \hat{x} ) 
  \label{eqn:K=x+yOptimalityDifferences}\\
  & D_3 = \Phi_* ( t, \hat{x} ) - F_* ( \hat{x} ) \nonumber
\end{align}
is estimated as follows.
For $D_1$, we can replace $F ( t, \hat{x} )$ by $( 1 - e^{- t} ) F ( t,
\hat{x} )$ since the difference between these terms is $O ( e^{- t} )$.
Utilizing the following error estimate for Stirling's approximation
{\cite{AS}}
\[ j! = \sqrt{2 \pi} j^{j + \frac{1}{2}} e^{- j} e^{\varepsilon ( j )},
   \qquad 0 < \varepsilon ( j ) < \frac{1}{12 j}, \]
we find that
\begin{align}
  D_1 & = e^{- 2 t} \sum_{1 \leq j < e^{2 t} \hat{x}} ( 1 - e^{- t} )^j
  \frac{j^j e^{- ( 1 - e^{- t} ) j}}{j!} \left( 1 - e^{\varepsilon ( j )}
  \right) + O ( e^{- t} ) \nonumber\\
  & = e^{- t} ( 1 - e^{- t} ) \mathbb{P} ( X_{1 - e^{- t}} < e^{2 t} \hat{x} ) + O
  ( e^{- t} ) . \nonumber
\end{align}
To estimate $D_3$, we use that the Riemann sum corresponding to an integral of
a differentiable function leads to an error proportional to the grid size.
Since $F'_* ( \hat{x} ) \in C^1$ we have that $D_3 = O ( e^{- 2 t} )$.
Finally, for the remaining term $D_2$ note that
\[ D_2 = e^{- 2 t} \sum_{1 \leq j < e^{2 t} \hat{x}} \frac{1}{\sqrt{2 \pi}} (
   e^{- 2 t} j )^{- 1 / 2} e^{- ( e^{- 2 t} j ) / 2} \left( e^{\alpha ( t, j
   )} - 1 \right) \]
where
\[ \alpha ( t, j ) = - j ( - e^{- t} - \log ( 1 - e^{- t} ) ) + e^{- 2 t} j /
   2 = - \frac{1}{3} e^{- 3 t} j + O ( e^{- 4 t} ) . \]
Therefore,
\[ D_2 = \frac{1}{3} e^{- t} \int_0^{\hat{x}} \frac{1}{\sqrt{2 \pi}}
   \hat{y}^{1 / 2} e^{- \hat{y} / 2} d \hat{y} + O ( e^{- 2 t} ) . \]
Putting together these estimates for (\ref{eqn:K=x+yOptimalityDifferences}) we
conclude that
\[ \sup_{\hat{x} > 0} |F ( t, \hat{x} ) - F_* ( \hat{x} ) | = O ( e^{- t}
   ) . \]

\section{Rate of convergence for the multiplicative kernel $K = x y$}
\label{sec:K=xy}

Lastly, we consider the case of the multiplicative kernel:

\begin{theorem}
  \label{thm:K=xy}Let $n_0$ be a positive measure such that $\int_{( 0, \infty )} x^2
  n_0 ( d x ) = \int_{( 0, \infty )} x^3 n_0 ( d x ) = 1$, and $\mu_4 =
  \int_{( 0, \infty )} x^4 n_0 ( d x ) < \infty$. With $\tau =
  \log ( 1 - t )^{- 1}$ and $\hat{n} ( t, d x )$ the rescaled solution to
  Smoluchowski's equation with $K = x y$ and initial data $n_0 ( d x )$, let
  \[ F ( \tau, \hat{x} ) = \int_{( 0, \hat{x} ]} \hat{y}^2 \hat{n} ( \tau, d
     \hat{y} ), \qquad F_* ( \hat{x} ) = \int_{( 0, \hat{x} ]} \hat{y}^2
     \hat{n}_{*, 2} ( \hat{y} ) d \hat{y} . \]
  Then for $\tau \in [ 0, \infty )$,
  \begin{equation}
    \label{eqn:K=xyMainResult} \sup_{\hat{x} > 0} |F ( \tau, \hat{x} ) -
    F_* ( \hat{x} ) | \leq C ( \mu_4 ) ( 1 + \tau ) e^{- \tau}
  \end{equation}
  where $C ( \mu_4 )$ is a constant that depends only on $\mu_4$.
\end{theorem}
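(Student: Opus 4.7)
\subsection*{Proof proposal for Theorem \ref{thm:K=xy}}

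The plan is to reduce the multiplicative case to the additive case of Theorem \ref{thm:K=x+y} by an explicit change of variables, rather than repeating the contour-deformation argument of Section \ref{sec:K=x+y}. The self-similar profiles in (\ref{eqn:SelfSimilarProfiles}) already suggest this route, since $\hat{x}^2 \hat{n}_{*,2}(\hat{x}) = \hat{x} \hat{n}_{*,1}(\hat{x})$, and the time parameter $\tau = \log(1-t)^{-1}$ is identical to the natural rescaled time for the additive kernel.

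First, I would introduce the rescaled mass measure
\[
  \tilde{n}(\tau, dx) \;=\; (1-t)\, x\, n(t, dx) \;=\; e^{-\tau}\, x\, n(t,dx), \qquad \tau = \log(1-t)^{-1},
\]
and verify by a direct moment computation that $\tilde{n}(\tau, dx)$ solves Smoluchowski's equation (\ref{eqn:Smoluchowski}) with the additive kernel $K = x+y$. Starting from the moment identity for $K = xy$ applied to the test function $x\phi(x)$, the extra factor of $(x+y) - (x+y) = -(y\phi(x) + x\phi(y)) + (x+y)(\phi(x)+\phi(y))$ appearing in the difference between the $K=xy$ and $K = x+y$ moment identities generates a term $\int\phi\,\tilde{n}\,dx \cdot \int y\,\tilde{n}\,dy$, which is then canceled by the time derivative of the prefactor $(1-t)$ together with the identity $\int y\,\tilde{n}(\tau,dy) = (1-t)\,m_2(t) = 1$. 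This is the one non-trivial calculation, but it is short and mechanical.

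Next I would check that the hypotheses of Theorem \ref{thm:K=x+y} are satisfied by $\tilde{n}_0(dx) = x\,n_0(dx)$:
\[
  \int x\,\tilde{n}_0\,dx = \mu_2 = 1, \qquad \int x^2\,\tilde{n}_0\,dx = \mu_3 = 1, \qquad \int x^3\,\tilde{n}_0\,dx = \mu_4 < \infty.
\]
A brief change-of-variables computation with the scalings $\lambda_2(t) = (1-t)^{-2} = e^{2\tau}$ and $m_2(t)=(1-t)^{-1}=e^\tau$ from (\ref{eqn:Moments})--(\ref{eqn:Scalings}) then shows that the rescaled additive solution $\hat{\tilde{n}}(\tau,d\hat{x}) = e^{2\tau}\tilde{n}(\tau,e^{2\tau}d\hat{x})$ satisfies
\[
  \hat{\tilde{n}}(\tau, d\hat{x}) \;=\; \hat{x}\, \hat{n}(\tau, d\hat{x}),
\]
where $\hat{n}$ is the rescaled multiplicative solution from (\ref{eqn:RescaledSoln}) with $\gamma = 2$. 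Consequently the distribution function defined in Theorem \ref{thm:K=xy} equals the one defined in Theorem \ref{thm:K=x+y} for $\tilde{n}$:
\[
  F(\tau,\hat{x}) \;=\; \int_{(0,\hat{x}]} \hat{y}^2\,\hat{n}(\tau,d\hat{y}) \;=\; \int_{(0,\hat{x}]} \hat{y}\, \hat{\tilde{n}}(\tau, d\hat{y}),
\]
and $F_*$ is identical in the two cases by (\ref{eqn:SelfSimilarProfiles}).

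Finally, applying Theorem \ref{thm:K=x+y} to $\tilde{n}$ yields
\[
  \sup_{\hat{x}>0} |F(\tau,\hat{x}) - F_*(\hat{x})| \;\leq\; C(\tilde{\mu}_3)(1+\tau)e^{-\tau} \;=\; C(\mu_4)(1+\tau)e^{-\tau},
\]
which is exactly (\ref{eqn:K=xyMainResult}). The only real obstacle is the bookkeeping around the change of variables and the time-dependent prefactor; once the identity $\int y\,\tilde{n}(\tau,dy) = 1$ is used to cancel the unwanted drift term, the derivation of the additive equation from the multiplicative one goes through cleanly, and the rest of the proof is essentially a dictionary between the two problems.
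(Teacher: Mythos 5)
Your proposal follows essentially the same route as the paper: both reduce the multiplicative case to the additive one via the Drake change of variables $x\,n^{\text{mul}}(t,dx) = (1-t)^{-1} n^{\text{add}}(\tau(t),dx)$, verify the moment conditions $\mu_2 = \mu_3 = 1$ and $\mu_4 < \infty$, observe that the rescaled distribution functions and self-similar profiles coincide, and then invoke Theorem~\ref{thm:K=x+y}. The one cosmetic difference is that you sketch a direct derivation of the Drake transformation from the weak form (including the cancellation via $\int y\,\tilde{n}(\tau,dy) = (1-t)m_2(t) = 1$), whereas the paper simply cites Drake and \cite{MP2}; the algebraic expression you write for the ``extra factor'' is a bit garbled, but the underlying cancellation you describe is correct and the overall argument is sound.
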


\subsection{Proof for the multiplicative kernel} \label{subsec:K=xy,Proof}

The rate for the multiplicative kernel can be recovered from that of the
additive kernel by a classical change of variable due to Drake \cite{Drake}
(and discussed in \cite{MP2}). With initial data $n_0 ( d x )$, the
measure-valued solutions $n^{\text{mul}} ( t, d x )$ and $n^{\text{add}} ( t,
d x )$ to (\ref{eqn:Smoluchowski}) with $K = x y$ and $K = x + y$,
respectively, are related by
\[ x n^{\text{mul}} ( t, d x ) = ( 1 - t )^{- 1} n^{\text{add}} ( \tau ( t ),
   d x ), \qquad t \in ( 0, 1 ) . \]
Using the similarity variables
\[ \hat{x}_1 = \frac{x}{\lambda_1 ( t )}, \qquad \hat{x}_2 =
   \frac{x}{\lambda_2 ( t )} \]
given by (\ref{eqn:Scalings}) and the rescaled number distributions
(\ref{eqn:RescaledSoln}), we have for $\tau \in [ 0, \infty )$ that
\[ \hat{x}_2^2 \hat{n}^{\text{mul}} ( \tau, d \hat{x}_2 ) = \hat{x}_1
   \hat{n}^{\text{add}} ( \tau, d \hat{x}_1 ) . \]
Therefore,
\[ F_2 ( \tau, \hat{x} ) = \int_0^{\hat{x}} \hat{y}_2^2 \hat{n}^{\text{mul}} (
   \tau, d \hat{y}_2 ) = \int_0^{\hat{x}} \hat{y}_1 \hat{n}^{\text{add}} (
   \tau, d \hat{y}_1 ) = F_1 ( \tau, \hat{x} ) . \]
Since (\ref{eqn:SelfSimilarProfiles}) implies that the limiting distributions
satisfy $F_{*, 2} ( \hat{x} ) = F_{*, 1} ( \hat{x} )$, we combine this
with the previous equation to get (\ref{eqn:K=xyMainResult}):
\[ \sup_{\hat{x} > 0} |F_2 ( \tau, \hat{x} ) - F_{*, 2} ( \hat{x} ) | =
   \sup_{\hat{x} > 0} |F_1 ( \tau, \hat{x} ) - F_{*, 1} ( \hat{x} ) | \leq
   C ( \mu_4 ) ( 1 + \tau ) e^{- \tau} . \]
The constant $C$ is the same as that obtained in the case $K = x + y$,
with $\mu_3$ replaced by $\mu_4$. Lastly, the convergence rate is nearly
optimal for monodisperse initial data by making the change of variables to the
additive kernel case.

\section*{Acknowledgements}
The author gratefully acknowledges Govind Menon for his guidance and support, and Bob Pego
for many useful discussions. An anonymous referee is also thanked for comments that
greatly improved the clarity of this article. This work is supported by the National Science
Foundation under grants DMS 06-05006 and DMS 07-48482.

\bibliographystyle{siam}
\bibliography{s1}

\begin{thebibliography}{1}

\bibitem{AS}
{\sc M.~Abramowitz and I.~A. Stegun}, {\em Handbook of mathematical functions
  with formulas, graphs, and mathematical tables}, vol.~55 of National Bureau
  of Standards Applied Mathematics Series, For sale by the Superintendent of
  Documents, U.S. Government Printing Office, Washington, D.C., 1964.

\bibitem{Aldous}
{\sc D.~J. Aldous}, {\em Deterministic and stochastic models for coalescence
  (aggregation and coagulation): a review of the mean-field theory for
  probabilists}, Bernoulli, 5 (1999), pp.~3--48.

\bibitem{Bertoin}
{\sc J.~Bertoin}, {\em Eternal solutions to {S}moluchowski's coagulation
  equation with additive kernel and their probabilistic interpretations}, Ann.
  Appl. Probab., 12 (2002), pp.~547--564.

\bibitem{CMM}
{\sc J.~A. Ca{\~n}izo, S.~Mischler, and C.~Mouhot}, {\em Rate of convergence to
  self-similarity for {S}moluchowski's coagulation equation with constant
  coefficients}, SIAM J. Math. Anal., 41 (2009/10), pp.~2283--2314.

\bibitem{Drake}
{\sc R.~Drake}, {\em {A general mathematical survey of the coagulation
  equation}}, in: G.M. Hidy, J.R. Brock (eds.), Topics in Current Aerosol
  Research (Part 2),  (1972), pp.~201--376.

\bibitem{Feller}
{\sc W.~Feller}, {\em An introduction to probability theory and its
  applications. {V}ol. {II}.}, Second edition, John Wiley \& Sons Inc., New
  York, 1971.

\bibitem{MP1}
{\sc G.~Menon and R.~L. Pego}, {\em Approach to self-similarity in
  {S}moluchowski's coagulation equations}, Comm. Pure Appl. Math., 57 (2004),
  pp.~1197--1232.

\bibitem{MP2}
\leavevmode\vrule height 2pt depth -1.6pt width 23pt, {\em Dynamical scaling in
  {S}moluchowski's coagulation equations: uniform convergence}, SIAM J. Math.
  Anal., 36 (2005), pp.~1629--1651 (electronic).

\bibitem{Pego}
{\sc R.~L. Pego}, {\em Lectures on dynamics in models of coarsening and
  coagulation}, in Dynamics in models of coarsening, coagulation, condensation
  and quantization, vol.~9 of Lect. Notes Ser. Inst. Math. Sci. Natl. Univ.
  Singap., World Sci. Publ., Hackensack, NJ, 2007, pp.~1--61.

\end{thebibliography}
\end{document}